\newtheorem{theorem}{Theorem}
\newtheorem{lemma}{Lemma}
\newtheorem{corollary}{Corollary}
\newtheorem{definition}{Definition}
\newtheorem{remark}{Remark}
\newtheorem{example}{Example}
\begin{document}

\title{The Service Rate Region of Hamming Codes}

\author {Priyanka~Choudhary 
       and~Maheshanand~Bhaintwal\\ Department of Mathematics, Indian Institute of Technology Roorkee, Roorkee 247667, India
       
       \thanks{Priyanka Choudhary is a Ph.D. scholar in the Department of Mathematics, Indian Institute of Technology Roorkee, Roorkee 247667, India (e-mail: \href{mailto:priyanka_c@ma.iitr.ac.in}{priyanka\_c@ma.iitr.ac.in}).}
  \thanks{Maheshanad Bhaintwal is with the Department of Mathematics, Indian Institute of Technology Roorkee, Roorkee 247667, India (e-mail: \href{mailto:maheshanand@ma.iitr.ac.in}{maheshanand@ma.iitr.ac.in}).}}



\maketitle

\begin{abstract}
The service rate region of a coded distributed storage system is the set of all achievable data access requests under the capacity constraints. This paper investigates the service rate regions of systematic Hamming codes using hypergraph theory and derives bounds for the maximal achievable service rate of individual data objects. We establish upper bounds on the sum of service rates of data symbols indexed by a subset of systematic nodes in a systematic binary Hamming code, and explore the achievability of these bounds. Additionally, for non-systematic binary Hamming codes, we conclude that the aggregate service rate is limited by the number of columns of odd weight in the associated generator matrix.
\end{abstract}

\begin{IEEEkeywords}
Service rate region, hypergraph, vertex cover, Hamming codes, recovery sets.
\end{IEEEkeywords}

\IEEEpeerreviewmaketitle

\section{Introduction}

\IEEEPARstart{D}{istributed} data storage systems have become integral to modern data management, ensuring reliability, scalability, and availability in diverse applications ranging from cloud computing to large-scale storage. To maintain accessibility and safeguard against potential failures, storage systems employ redundancy, either through data replication or coding techniques. While replication has traditionally been a popular method, it is no longer a practically feasible method for exponentially increasing data in modern times. As a result, erasure coding has emerged as a more efficient alternative. Unlike replication, erasure coding reduces storage overhead while maintaining reliability and improving data retrieval efficiency through local recovery, as detailed in works like \cite{dimakis2010network, dimakis2011survey, gopalan2012locality}, and \cite{ huang2013pyramid}. 
Erasure coding also effectively enhances availability by reducing download latency for retrieving entire data, as highlighted in several research papers, see e.g., \cite{joshi2012coding, liang2013fast, gardner2015reducing}. Among the erasure coding techniques, binary Hamming codes, renowned for their efficient error-correction properties and low overhead, present an interesting area to explore.

Whenever new erasure coding schemes are developed and presented, there is often a question regarding which one is the most effective for a given application. To assess the performance of different coded storage systems, various factors are compared to ensure the selection of a scheme that best aligns with the user's requirements. One critical metric for evaluating the performance of erasure-coded storage systems is the ``\emph{service rate region (SRR)}''. This region represents the set of download or access request rates that the system can fulfill without exceeding its service capacity. This concept was initially introduced by Noori et al. \cite{noori2016storage}. Later, Akta{\c{s}} et al. \cite{aktacs2017service} extended the framework of storage allocation by connecting it to the service rate region. Characterizing the SRR provides a clear understanding of the overall rate of requests that a system can handle. In simpler terms, the SRR metric serves as a measure of assessing the operational efficiency of distributed coded systems under the conditions of parallel access, while considering the constraints imposed by the limited service capacity of the network's nodes.

This metric is particularly important in scenarios where multiple users access shared content concurrently, often with uneven access patterns. It also highlights the system's resilience to varying request patterns (skewed demand distribution), where certain objects are accessed more frequently than others.

There are two primary problems associated with SRR: \begin{enumerate}
    \item Determining the SRR for a given coding scheme.
    \item  For a given region $R$, determining the coding scheme whose SRR contains $R$ with the minimum number of storage nodes.
\end{enumerate}
Research on SRR aims to find solutions to the problems mentioned above.   
Finding the service rate region of a scheme involves determining the optimal allocation of incoming data download requests to the recovery sets. This process effectively transforms the task of maximizing the service rate region into an optimization problem.

In \cite{kazemi2020geometric}, upper bounds for the SRR of binary simplex codes and first-order Reed-Muller codes were determined through a projective geometric approach.  Ly et al.\cite{ly2025RM} generalized their approach to analyze the service rate region of higher-order RM codes, providing better inner and outer bounds.

It was realized in \cite{kazemi2020combinatorial} that the problem of finding a valid allocation for SRR was related to identifying fractional matching in hypergraphs, and is similar to batch codes. Alfarano et al.\cite{alfarano2024service} explored the polytope structure of the service rate region and proved that every rational $k$-tuple of the service rate region has a valid rational allocation. In \cite{alfarano2022dual}, the authors introduced two dual distance outer bounds for the SRR by using the structural properties of the generator matrix. 

 The second significant challenge is identifying the minimum number of storage nodes $n_R$ and the best coding scheme for the given region $R$. This aspect has been analyzed in \cite{kazemi2021efficient}. The development of a coding scheme for a fixed number of servers aimed at maximizing the SRR was investigated in \cite{aktacs2017service} for the case when $k=2$, and subsequently in \cite{anderson2018service} for $k=3$. In 2023, Alfarano et al.\cite{alfarano2023service} investigated the SRR of locally recoverable codes (LRCs) and batch codes. Kiliç et al.  \cite{kilicc2024parameters} have provided both upper and lower bounds for the minimum number of servers $n_R$ corresponding to a fixed field size. Moreover, for a fixed number of servers, they have obtained bounds on the field size as well. 
 
 This research extends the idea of recent developments in the characterization of SRR for MDS-coded distributed systems. Ly and Soljanin\cite{ly2025service} established a graph-theoretic framework that connects SRR to fractional matchings (or fractional vertex cover) in quasi-uniform hypergraphs, demonstrating that the SRR polytope expands with systematic columns in generator matrices. There are still many questions that remain unanswered regarding the SRR of various coding schemes, particularly when the generator matrix is non-systematic.
 
 Compared to other coding schemes, characterizing the service rate region of binary Hamming codes using the hypergraph framework is not overly complex due to their structural properties and unique form of parity-check matrices. Recently, Ly and Soljanin \cite{ly2025m} explored the connection between combinatorial design theory and SRR. They established that the maximum achievable demand of any data object in a linear code is attained when the minimum weight codewords of the dual code form a $2$-design. Through this analysis, they proved that, for systematic binary Hamming codes, the maximum achievable demand of a data object is consistently $3$, irrespective of the code's parameters. This result is also obtained in the present paper through a different approach by analyzing the recovery system structure of Hamming codes. 
 
 We explore the service rate region of binary Hamming codes thoroughly using the existing approach of hypergraphs \cite{aktacs2021service} \cite{ly2025service} in the context of SRR and attempt to find more accurate bounds to characterize their service rate region. 

 This paper is organized as follows. Section \ref{ii} includes some preliminaries and the fundamental concepts of the service rate region. It also examines the challenge of identifying the SRR from various angles, including combinatorial and geometric perspectives. In Section \ref{iii}, we present the principal findings of this paper, concentrating on the structure of the recovery $G$ system and the maximal achievable service rate of a data object for systematic $q$-ary Hamming codes, denoted by Ham($r, q$). We also establish an upper bound for the cumulative service rate of non-systematic Ham($r, 2$). Furthermore, we outline an upper limit on the achievable demand of data files corresponding to a subset $\mathcal{I}$ of systematic nodes of systematic Ham($r, 2$) and demonstrate that this limit is indeed attainable. Section \ref{iv} concludes the paper and proposes some potential directions for future research.

\section{Preliminaries}\label{ii}
 This section introduces the notations, definitions, and prior results that form the foundation of the service rate region of coded schemes in data storage systems. Unless otherwise stated, all notations and terminologies follow standard conventions.
 
 Let $\mathbb{F}_q$ be a finite field of size $q$. A linear code $C$ of length $n$ and dimension $k$ over $\mathbb{F}_q$ is a $k$-dimensional subspace of $\mathbb{F}_q^n$, and referred to as an $[n, k]$-code. Given an information vector with $k$ symbols, an $n$-symbol codeword can be generated using a valid encoding process. The set of coordinate positions with non-zero entries in a codeword $c\in C$ is known as the support of $c$, i.e., for any $c\in C$, the support of $c$, denoted by $\text{Supp}(c)$ is \[\text{Supp}(c)=\{i\mid c_i\neq 0,  1\leq i\leq n\}.\]

  The minimum distance of a code $C$ is defined as
$d(C) = \min
\{d(x, y): x, y\in C, x\ne y\},$ where $d(x, y)=|\{i : x_i\neq y_i\}|.$
For a linear code $C$, its orthogonal complement is referred to as the dual code of $C$ and denoted by $C^{\perp}$. Thus,  \[C^{\perp} =\{x \in \mathbb{F}_q^n \mid x\cdot c=0, \text{for all } c \in C\},\]
  where $x\cdot c$ denotes the usual inner product of $x$ and $c$.
     
     One can also characterize an $[n, k]$ linear code $C$ by a $k\times n$ generator matrix $G$ of rank $k$ over $\mathbb{F}_q$. The rows of $G$ form an $\mathbb{F}_q$ basis for $C$. We have $x=(x_1, x_2, \cdots, x_k)\in \mathbb{F}_q^k$ as information (data) objects and $x\cdot G \in \mathbb{F}_q^n$ is the corresponding codeword where the $k$ data objects are stored. Then, $C$ can be described as the set $C=\{x\cdot G \mid x\in \mathbb{F}_q^k\}$. An $(n-k)\times n$ matrix $H$ of rank $n-k$ over  $\mathbb{F}_q$ is called a parity-check matrix of $C$ if $G\cdot H^\top=0$. A parity-check matrix $H$ of $C$ is also a generator matrix for the dual code of $C$.

We consider a coded distributed system consisting of $n$ servers to store $k$ information(data) files represented by elements of $\mathbb{F}_q$. Let $C$ be the corresponding code over $\mathbb{F}_q$ and let $G$ denote a generator matrix of $C$. One server stores exactly one element of $\mathbb{F}_q$ and has two major functions: one is to store the data, and the other is to process the incoming data access requests from the system's users. Each server possesses a service capacity denoted by $\mu$, which indicates that it is capable of processing $\mu$ requests per unit of time. In this storage system, each server stores a linear combination of the $k$ information files, and each column of the matrix $G$ uniquely represents a particular server. A generator matrix $G$ is said to be systematic if there are $k$ columns in the matrix which are the ordered standard basis vectors $e_1, e_2, \cdots, e_k$ of $\mathbb{R}^k$. The columns of $G$ that are non-zero scalar multiples of the standard basic vectors are known as systematic servers. 
For an $[n, k]$ linear code $C$ over $\mathbb{F}_q$, the standard form of a systematic generator matrix is $G=[I_k\mid A]$, and it is well known that $H=[-A^\top\mid I_{n-k}]$ is the corresponding parity-check matrix for $C$. If another systematic matrix $G'$ of $C$ is obtained through permuting the columns of $G$, then the corresponding parity-check matrix $H'$ is obtained by applying the same column permutation to $H$. This convention will be followed throughout this paper. 

As the information files are saved across the $n$ servers in coded or systematic servers, we can recover these files from multiple subsets $I\subset[n]=\{1, 2, ..., n\}$. This leads to the following definition.

\begin{definition}
    A set $R\subset[n]$ is called a recovery set for the $i$-th data object if the standard basis vector $e_i \in \mathbb{F}_q^k$ belongs to the $\mathbb{F}_q$-linear span of the column vectors of $G$ indexed by the elements in $R$.
\end{definition}

The set of all recovery sets for the $i$-th information file is denoted by $\mathcal{R}_i^{all}$. 
We define the set $\mathcal{R}(G)=\{ \mathcal{R}_1,  \mathcal{R}_2, ...,  \mathcal{R}_k\}$ as a \textit{recovery $G$ system}\cite{aktacs2021service}, where each component $\mathcal{R}_i \subseteq  \mathcal{R}_i^{all}(G)$. The recovery system containing all the recovery sets corresponding to the generator matrix $G$ is denoted by $\mathcal{R}^{all}(G)$.

 An element $R$ of $\mathcal{R}_i^{all}(G)$ is called a minimum recovery set for the $i$-th data object if there does not exist $R' \in \mathcal{R}_i^{all}(G)$ such that $R' \subsetneq R$. A minimum recovery $G$ system is \[\mathcal{R}^{min}(G)=\{\mathcal{R}^{min}_i  \mid 1 \leq i \leq k\},\] where $\mathcal{R}^{min}_i $ contains all the minimum recovery sets of the $i$-th object. 
 
 \begin{example}
     A systematic code $C$ has a systematic generator matrix. Therefore, a minimum recovery set for a data object consists only of the corresponding systematic column of the generator matrix.
 \end{example}

Consider a linear code $C$ characterized by a generator matrix $G$ that can be transformed into a systematic form through a permutation of its columns. In this case, the parity-check matrix helps in completely characterizing the recovery system using the codewords of the corresponding dual code. Consequently, we have the following results.

\begin{lemma}\cite{alfarano2022dual}\label{Rset} Let $G$ be a systematic generator matrix of a linear code $C$. A subset $R\subset [n]$ is a recovery set for the $i$-th coordinate if and only if one of the following conditions is satisfied:
\begin{enumerate}
    \item The set $R$ contains a systematic server of the $i$-th coordinate. 
    \item  There exists a codeword $c'\in C^\perp$ such that the Supp$(c')\subseteq R\cup \{i\}$ and $i\in $ Supp$(c')$.
\end{enumerate}
\end{lemma}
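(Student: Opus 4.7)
The plan is to prove the two implications of the ``if and only if'' separately. By the convention established in the paper, I may assume without loss of generality that $G$ is already in standard systematic form $G=[I_k\mid A]$, so that the $i$-th column satisfies $g_i=e_i$ for every $i\in[k]$, with corresponding parity-check matrix $H=[-A^{\top}\mid I_{n-k}]$; the general case follows by applying the same column permutation to $G$ and $H$.

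For the sufficiency direction ($\Leftarrow$), I would treat the two conditions in turn. Under condition (1), some $j\in R$ satisfies $g_j=\beta e_i$ with $\beta\in\mathbb{F}_q^{*}$, so that $e_i=\beta^{-1}g_j$ is visibly in the $\mathbb{F}_q$-span of $\{g_\ell:\ell\in R\}$, making $R$ a recovery set by definition. Under condition (2), the defining relation $c'\in C^{\perp}$ is equivalent to $G\,(c')^{\top}=0$, i.e., $\sum_{j=1}^{n} c'_j\, g_j=0$. Isolating the $i$-th summand and using $g_i=e_i$ together with $c'_i\neq 0$ yields
\[
e_i \;=\; -\frac{1}{c'_i}\sum_{j\in\text{Supp}(c')\setminus\{i\}} c'_j\, g_j,
\]
which expresses $e_i$ as a linear combination of the columns indexed by $\text{Supp}(c')\setminus\{i\}\subseteq R$, proving that $R$ is a recovery set.

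For the necessity direction ($\Rightarrow$), assume $R$ is a recovery set for the $i$-th coordinate and that condition (1) fails; the task is to exhibit a dual codeword witnessing condition (2). From the recovery hypothesis, write $e_i=\sum_{j\in R}\alpha_j\, g_j$ for suitable $\alpha_j\in\mathbb{F}_q$. Since condition (1) fails, no $g_j$ with $j\in R$ is a scalar multiple of $e_i$; in particular this forces $i\notin R$, because otherwise $g_i=e_i$ would itself realise condition (1). I would then define $c'\in\mathbb{F}_q^{n}$ by setting $c'_i=1$, $c'_j=-\alpha_j$ for $j\in R$, and $c'_j=0$ elsewhere. The computation $\sum_{j=1}^{n} c'_j\, g_j = g_i - \sum_{j\in R}\alpha_j\, g_j = 0$ certifies $G\,(c')^{\top}=0$, hence $c'\in C^{\perp}$; by construction $i\in\text{Supp}(c')\subseteq R\cup\{i\}$, as required.

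I do not anticipate a deep obstacle: the argument is essentially a direct dictionary between linear dependences among the columns of $G$ and codewords of $C^{\perp}$. The main subtlety is purely bookkeeping, namely ensuring that the coefficient assignment producing $c'$ is unambiguous (which requires $i\notin R$, guaranteed precisely because condition (1) is assumed to fail) and keeping in mind that ``systematic server of the $i$-th coordinate'' refers to any column proportional to $e_i$, not only the canonical one at position $i$, so that condition (1) is genuinely weaker than the assumption $i\in R$.
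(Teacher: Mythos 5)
Your proof is correct; note that the paper itself only states this lemma as a citation to \cite{alfarano2022dual} and gives no proof of it, so there is no in-paper argument to compare against. Your argument --- translating $c'\in C^{\perp}$ into the column relation $\sum_{j} c'_j g_j = 0$ for one direction, and conversely packaging the coefficients of a representation $e_i=\sum_{j\in R}\alpha_j g_j$ into a dual codeword, with the key bookkeeping observation that failure of condition (1) forces $i\notin R$ under the standard-form reduction --- is the standard dictionary between dual codewords and column dependences and matches the approach of the cited source, so the proposal is sound as written.
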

 \begin{lemma}\label{dperp}
    Let $G$ be a systematic generator matrix of a linear code $C$ or column permutation equivalent to a systematic generator matrix. If $R$ is a recovery set for the $i$-th coordinate in $C$, then either $|R|=1$ or $|R|\geq d^\perp-1,$ where $d^\perp$ is the minimum distance of the dual code of $C$.
 \end{lemma}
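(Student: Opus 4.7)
The plan is to split the analysis of a recovery set $R$ for the $i$-th coordinate according to the dichotomy provided by Lemma \ref{Rset}. I read the statement as concerning minimum recovery sets, since otherwise any single-element recovery set could be padded with arbitrary extra columns and violate the bound; under this reading the two listed sizes correspond exactly to the two cases of Lemma \ref{Rset}.

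First I would handle the case where $R$ contains a systematic server for the $i$-th coordinate, say the column indexed by $j$. Because that column is a nonzero scalar multiple of $e_i$, the singleton $\{j\}$ is already a recovery set for the $i$-th coordinate, and minimality of $R$ forces $R=\{j\}$, giving $|R|=1$.

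In the complementary case, $R$ contains no systematic server for the $i$-th coordinate, so Lemma \ref{Rset} furnishes a nonzero codeword $c'\in C^\perp$ satisfying $\text{Supp}(c')\subseteq R\cup\{i\}$ and $i\in\text{Supp}(c')$. Under the column-permutation convention fixed in Section \ref{ii} (the same permutation being applied to $G$ and $H$), the $i$-th column of $G$ is itself a systematic server for the $i$-th coordinate, so $i\notin R$ and hence $|\text{Supp}(c')|\leq |R|+1$. Combining this with $|\text{Supp}(c')|\geq d^\perp$, which holds because $c'\neq 0$, yields $|R|\geq d^\perp-1$.

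The argument is essentially bookkeeping on top of Lemma \ref{Rset} together with the definition of the dual distance, and I do not anticipate a substantive obstacle. The only subtlety is the index tracking when $G$ is merely column-permutation-equivalent to a systematic generator rather than in standard form, but the convention fixed in the preliminaries handles this transparently.
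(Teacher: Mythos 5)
Your proof is correct and follows essentially the same route as the paper's: invoke Lemma \ref{Rset} to obtain a dual codeword $c'$ with $\mathrm{Supp}(c')\subseteq R\cup\{i\}$ and compare $|\mathrm{Supp}(c')|$ with $d^\perp$. Your explicit case split (systematic server in $R$ versus not), together with the minimal-recovery-set reading, actually treats more carefully the situation the paper's proof glosses over (a non-singleton $R$ containing a systematic server, for which condition (2) of Lemma \ref{Rset} need not hold); the side remark that the $i$-th column is itself the systematic server is unnecessary, and not always accurate after a column permutation, but harmless since $|\mathrm{Supp}(c')|\leq |R|+1$ already follows from $\mathrm{Supp}(c')\subseteq R\cup\{i\}$.
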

\begin{proof}
    When $R=\{i\}$, it is obvious that $R$ is a recovery set for the $i$-th coordinate. Let us consider $R$ a recovery set such that $R\neq\{i\}$. 
    Using Lemma \ref{Rset}, we know that there exists a codeword $c'\in C^\perp$ such that the Supp$(c')\subseteq R\cup \{i\}.$
    Therefore we have, \[d^\perp \leq |\text{Supp}(c')|\leq |R|+1.\]\end{proof}
    
    Continuing in this direction, we present the following result that helps in determining the number of recovery sets of a data file using the codewords of the dual code.

    \begin{lemma}\label{ithdata}\cite{ling2004coding}
        Consider an $[n, k, d]$ linear code $C$ over $\mathbb{F}_q$. There are exactly $1/q$ of the codewords of $C$ which have $\alpha\in\mathbb{F}_q$ at any given coordinate position $i$, or zero in all the codewords at the $i$-th coordinate position, $1\leq i\leq n$.
    \end{lemma}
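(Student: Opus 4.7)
The plan is to use the linearity of the evaluation-at-coordinate map. Specifically, fix a coordinate $i \in \{1, 2, \ldots, n\}$ and define the projection
\[
\phi_i : C \to \mathbb{F}_q, \qquad \phi_i(c) = c_i.
\]
Since $C$ is an $\mathbb{F}_q$-linear subspace of $\mathbb{F}_q^n$ and $\phi_i$ is the restriction of a coordinate projection, $\phi_i$ is $\mathbb{F}_q$-linear. Hence its image $\operatorname{Im}(\phi_i)$ is an $\mathbb{F}_q$-subspace of $\mathbb{F}_q$, and therefore equals either $\{0\}$ or the whole of $\mathbb{F}_q$.

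I would split the argument into these two cases. In the first case, $\operatorname{Im}(\phi_i) = \{0\}$ means that $c_i = 0$ for every codeword $c \in C$, which is precisely the ``zero in all codewords at position $i$'' alternative in the statement. In the second case, $\operatorname{Im}(\phi_i) = \mathbb{F}_q$, so $\phi_i$ is surjective, and the first isomorphism theorem gives $C/\ker(\phi_i) \cong \mathbb{F}_q$. Consequently $|\ker(\phi_i)| = |C|/q$.

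Finally, for any $\alpha \in \mathbb{F}_q$, the preimage $\phi_i^{-1}(\alpha)$ is a coset of $\ker(\phi_i)$ in $C$ (take any fixed codeword $c^{(\alpha)}$ with $c^{(\alpha)}_i = \alpha$, which exists by surjectivity, and observe $\phi_i^{-1}(\alpha) = c^{(\alpha)} + \ker(\phi_i)$). Cosets of a subgroup have the same cardinality as the subgroup, so
\[
|\phi_i^{-1}(\alpha)| = |\ker(\phi_i)| = \frac{|C|}{q},
\]
which gives the claimed fraction $1/q$. There is no serious obstacle here; the only thing to be careful about is to observe that a subspace of the one-dimensional $\mathbb{F}_q$-vector space $\mathbb{F}_q$ must be $\{0\}$ or $\mathbb{F}_q$, which cleanly produces the dichotomy in the lemma.
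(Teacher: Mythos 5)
Your proof is correct. The paper does not prove this lemma at all---it is quoted from the Ling--Xing textbook as a known result---and your argument (the coordinate projection $\phi_i$ is $\mathbb{F}_q$-linear, so its image is $\{0\}$ or $\mathbb{F}_q$, and in the surjective case every fiber $\phi_i^{-1}(\alpha)$ is a coset of $\ker(\phi_i)$ of size $|C|/q$) is exactly the standard proof that reference supplies, so there is nothing missing or divergent to flag.
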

 For a code $C$ with a systematic generator matrix $G$, let $S\subset[n]$ such that $|S|=k$ and all the columns of $G$ indexed by the elements of $S$ are the distinct $k$ systematic columns. From Lemma \ref{ithdata}, for each $i\in S$, there are $q^{n-k-1}$ codewords in $C^\perp$ that have $1$ in coordinate position $i$. Therefore, the number of recovery sets $|\mathcal{R}^{\min}_i|$ of the data object at the $i$-th place satisfies $|\mathcal{R}^{\min}_i|\geq q^{n-k-1}$ for all $i\in S$. This observation will lead to important results in Section~\ref{iii}.

In the following subsection, we provide a formal definition of the service rate region associated with a given recovery $G$ system, as well as analyze the bounds previously established for this region.
\subsection{Service Rate Region}

Let $\lambda_i\in\mathbb{R}_{\geq0}$ be the download/access request rate for the $i$-th data file, where $\mathbb{R}_{\geq0}$ denotes the set of positive real numbers. The $k$-tuple $(\lambda_1, \lambda_2, ..., \lambda_k)\in \mathbb{R}^k_{\geq 0}$ is considered as an achievable service rate if the system can serve it, while satisfying the capacity constraints on the nodes.
 
\begin{definition}\label{SRRdef}\cite{aktacs2021service}
    Consider a system where $k$ data files are distributed among $n$ servers using an $[n, k]$ linear code characterized by a generator matrix $G$ with the server capacity $\mu\geq 1$. Let $ \mathcal{R}=\{ \mathcal{R}_1,  \mathcal{R}_2, ...,  \mathcal{R}_k\}$ be a recovery $G$ system. Then the associated service rate region (SRR) is the set of all such $k$-tuples $(\lambda_1, \lambda_2, ..., \lambda_k)\in \mathbb{R}^k$ such that for each $i\in[k]$ there exists a list
    \[\{\lambda_{iR} \in \mathbb{R} \mid R \in \mathcal{R}_i, \lambda_{iR}\geq0\},\] satisfying the following conditions:

    \begin{equation} \label{eq1}
        \sum_{R\in \mathcal{R}_i}\lambda_{iR} =\lambda_i, \quad \forall i \in [k],
    \end{equation}
   \begin{equation}\label{eq2}
        \sum_{i=1}^k\sum_{\substack{ R\in \mathcal{R}_i\\ v \in R}}\lambda_{iR} \leq \mu, \quad \forall v \in [n].
    \end{equation} This service rate region is denoted by $\Lambda(\mathcal{R}(G), \mu)$.
\end{definition}
Equation (\ref{eq1}) guarantees that the demand for all data objects is adequately satisfied, while  (\ref{eq2}) asserts that the total number of requests handled by each server does not exceed its designated serving capacity. A collection of $\{\lambda_{iR}\}$ which satisfies \eqref{eq1} and \eqref{eq2} is called a valid allocation for the given recovery system $\mathcal{R}$ and service capacity $\mu$. Here $\lambda_{iR}$ is the fraction of $\lambda_i$ assigned to the recovery set $R$ of the $i$-th data symbol.

It is also important to note here that to determine the SRR for a given generator matrix $G$ it is not necessary to consider all the recovery sets, because $\Lambda(\mathcal{R}^{all}(G), \mu)$ is the same as $\Lambda(\mathcal{R}^{\min}(G), \mu)$\cite{alfarano2022dual}, where \[\mathcal{R}^{all}(G)=\{\mathcal{R}^{all}_1, \mathcal{R}^{all}_2, ..., \mathcal{R}^{all}_k\}.\] Additionally, $\Lambda(\mathcal{R}(G), \mu)=\mu\Lambda(\mathcal{R}(G), 1).$  

 For the rest of the paper, we will examine the service rate region for the minimum recovery $G$ system with uniform service capacity $\mu=1.$  We denote the corresponding SRR by $\Lambda(G).$
 
It is clear from Definition (\ref{SRRdef}) that the problem of finding the service rate region of a linear code is an optimization problem with $\lambda_{iR}$ as variables. However, as the value of $k$ increases, the complexity of solving this problem also increases due to the number of variables growing exponentially. To address this challenge, we turn to other mathematical tools to find the service rate region. 
Here, we briefly introduce two ways to interpret SRR due to \cite{aktacs2021service}: one is using the geometric representation and the other is utilizing the concept of hypergraphs.

\subsection{Geometric interpretation of SRR}
Determining the SRR of a coded scheme can be framed as a linear optimization problem where the intersection of several half-spaces encloses the SRR. The feasible region of this problem results in a convex polytope in $\mathbb{R}^k_{\geq 0}$. A comprehensive study of SRR, proving that it is a complex polytope, was done by Alfarano et al. \cite{alfarano2024service}, and we have included some of the basic concepts and results from their work here.
\begin{definition}
    A simplex in $\mathbb{R}^k$ is defined as the convex hull of the set $\{\delta_1 e_1, \delta_2 e_2, \ldots, \delta_k e_k\}$, where $\delta_i\in \mathbb{R}$ and $e_1, e_2, \cdots, e_k$ are the standard basis vectors in $\mathbb{R}^k$. In particular, a $\delta$-simplex is a simplex for which $\delta_i=\delta \text{ for all } i \in [k].$
\end{definition}
We define $\lambda_i^*(G)= \max\{x\mid xe_i \in \Lambda(G)\}$. Then, the largest possible size of the $\delta$-simplex in the SRR is given by  \[\delta(G)= \min \{\lambda_i^*(G), 1 \leq i\leq k\}.\]
The simplex structure plays a crucial role in characterizing SRR using the terms \textit{maximal achievable simplex} and \textit{maximal matching simplex}\cite{ly2025service}. The maximal matching simplex refers to the smallest geometric simplex that contains the SRR of the relevant storage scheme, while the maximal achievable simplex is the largest simplex contained in the SRR. Additionally, the term $\delta(G)$ represents the size of the maximal achievable $\delta$-simplex as it is given by the region of all service rate $k$-tuples satisfying $\sum_{i=1}^k\lambda_i\leq\delta$. 
\begin{remark}\label{avail}
    If there are $t$ disjoint recovery sets present in a coded system with at least one systematic server present in the generator matrix for every information symbol, then we have $ (t+1)e_i \in \Lambda(G), \forall i \in [k]$, implying that
    \[ \lfloor \delta(G) \rfloor \geq (t+1).\]
\end{remark}

\begin{lemma}\cite{kazemi2020geometric}\cite{alfarano2024service}\label{mindis}
    For an $[n, k, d]$ linear code with generator matrix $G$, if $\delta e_i\in\Lambda(G)$ for all $i\in[k]$, then $\lceil \delta \rceil \leq d.$ In particular, $\lceil \delta(G) \rceil \leq d.$
\end{lemma}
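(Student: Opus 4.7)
The plan is to exploit a minimum-weight codeword of $C$ as a combinatorial witness against any allocation that serves a single data object at a large rate. Fix a nonzero codeword $c\in C$ of weight $d$, write $c=xG$ with $x=(x_1,\ldots,x_k)\in\mathbb{F}_q^k\setminus\{0\}$, and choose an index $i\in[k]$ with $x_i\neq 0$. Let $T=\text{Supp}(c)$, so $|T|=d$. By hypothesis $\delta e_i\in\Lambda(G)$, so there is a valid allocation $\{\lambda_{iR}:R\in\mathcal{R}_i\}$ satisfying $\sum_{R}\lambda_{iR}=\delta$ and $\sum_{R\ni v}\lambda_{iR}\leq 1$ for every node $v\in[n]$.

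The core of the proof is the claim that every recovery set $R$ for the $i$-th coordinate meets $T$. By Definition 1, $e_i$ is an $\mathbb{F}_q$-linear combination of the columns of $G$ indexed by $R$, so $e_i=\sum_{j\in R}a_j g_j$ where $g_j$ denotes the $j$-th column of $G$ and $a_j\in\mathbb{F}_q$. Left-multiplying by the row vector $x$ gives
\begin{equation*}
x_i \;=\; x\,e_i \;=\; \sum_{j\in R} a_j\,(x\,g_j) \;=\; \sum_{j\in R} a_j\, c_j.
\end{equation*}
If $R\cap T=\emptyset$, every $c_j$ on the right vanishes, forcing $x_i=0$ and contradicting the choice of $i$. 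Hence $|R\cap T|\geq 1$ for every $R\in\mathcal{R}_i$.

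The conclusion then follows from a standard double count. Using $|R\cap T|\geq 1$ and swapping the order of summation,
\begin{equation*}
\delta \;=\; \sum_{R\in\mathcal{R}_i}\lambda_{iR} \;\leq\; \sum_{R\in\mathcal{R}_i}\lambda_{iR}\,|R\cap T| \;=\; \sum_{v\in T}\sum_{\substack{R\in\mathcal{R}_i\\ R\ni v}}\lambda_{iR} \;\leq\; \sum_{v\in T}1 \;=\; d,
\end{equation*}
where the last inequality applies the per-node capacity bound \eqref{eq2} at each $v\in T$. Since $d$ is an integer, this yields $\lceil\delta\rceil\leq d$; specializing to $\delta=\delta(G)$, which satisfies $\delta(G)e_i\in\Lambda(G)$ for every $i$ by definition, gives the ``in particular'' claim.

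The step I expect to be the main obstacle in writing cleanly is the intersection claim, because it requires reading ``recovery set'' in its linear-algebraic sense (pairing with a codeword via $x$) rather than as a purely combinatorial object; once that one-line dual-pairing trick is in hand, the packing inequality is routine LP-style book-keeping.
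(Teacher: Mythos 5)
Your proof is correct; note that the paper itself does not prove Lemma \ref{mindis} but imports it from \cite{kazemi2020geometric} and \cite{alfarano2024service}, and your argument is essentially the one used there: the pairing $x_i=\sum_{j\in R}a_j c_j$ is the algebraic form of the geometric observation that every recovery set for coordinate $i$ must use at least one of the $d$ columns of $G$ lying outside the hyperplane $\{y\in\mathbb{F}_q^k : x\cdot y=0\}$ defined by a minimum-weight codeword $c=xG$ with $x_i\neq 0$, after which the capacity constraints at those $d$ nodes give $\delta\leq d$ and hence $\lceil\delta\rceil\leq d$. The only point worth making explicit is in the ``in particular'' step: $\delta(G)e_i\in\Lambda(G)$ for every $i$ because the service rate region is a closed, coordinatewise-downward-closed polytope, so the maximum defining $\lambda_i^*(G)$ is attained and smaller multiples of $e_i$ remain achievable.
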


\subsection{Hypergraph as recovery \texorpdfstring{$G$}{G} system}

Hypergraphs are a generalization of graphs, where an edge (hyperedge) may contain more than two vertices. A hypergraph is represented by $H=(V, E(V))$, where $V$ is a set of vertices and $E(V)$ contains edges, and is a subset of the power set of $V$. The hypergraphs were first viewed as recovery $G$ systems in \cite{kazemi2020combinatorial}, where each hyperedge represents a recovery set and is labeled by the corresponding standard basis vector of the data file it recovers. The service rate served by an edge is associated with the fractional matching assigned to that edge. A matching in the hypergraph $H$ is a set of pairwise disjoint (non-adjacent) hyperedges. The matching number of a hypergraph $H$ is the maximum cardinality of a matching in $H$, denoted by $\nu(H)$. We can also consider matching as assigning a number $w(e)\in \{0, 1\}$ to each edge $e$ such that $\sum_{e \ni v} w(e)=1$ for each vertex $v$ of the hypergraph.

If we assign a number $w(e)$ to each edge $e$ such that $w(e) \in [0, 1]$ and $\sum_{e \ni v} w(e)\leq 1$ for each vertex $v$, then this setup is known as fractional matching. The maximum sum of the assigned 
number $w(e)$ of all the edges in the hyperedge $H$ is known as the \textit{fractional matching number}, and it is denoted by $\mu_f(H).$ Therefore, the problem of splitting the request rates to different recovery sets is now reduced to assigning a fractional matching to the hyperedges. The equivalence of a demand vector being in SRR and the existence of a corresponding fractional matching in the graph representation is proven in \cite{kazemi2020combinatorial}.

To optimally assign request rates, we need to maximize the fractional matching to the edges. 
For a recovery $G$ system and the associated recovery hypergraph $H$, the relation between the cumulative service rate and maximum fractional matching number is given by \[\sum_{i=1}^k\lambda_i\leq \mu_f(H).\]

Similar to the concept of vertex cover in graphs, we also have the vertex cover/transversal in hypergraphs. A transversal is the set $T\subseteq V$ such that for every $e\in E, T\cap e\neq \emptyset$. The smallest size of a transversal $T$ in $H$, denoted by $\tau(H)$, is known as the transversal number of $H$. For a hypergraph $H$, it stands that $\nu(H)\leq \mu_f(H)\leq\tau(H).$ Note that the size of the maximal matching simplex is equal to the fractional matching number $\mu_f(H)$.
\begin{lemma}\label{threeB}(\cite{kazemi2020combinatorial} Theorem 2)
    For a coding scheme with generator matrix $G$ and uniform service capacity $\mu=1$, if $(\lambda_1, \lambda_2, \ldots, \lambda_k)\in\Lambda(G)$, and $H$ is the associated recovery hypergraph, then the following relations are satisfied:
    \[\nu(H)\leq \max \left(\sum_{i=1}^k\lambda_i\right)=\mu_f(H)\leq\tau(H)~.\]
\end{lemma}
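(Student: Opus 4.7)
The plan is to identify feasible allocations $\{\lambda_{iR}\}$ with weight functions $w$ on the hyperedges of $H$ and then prove the three comparisons in turn. Recall that in the hypergraph model each hyperedge $e$ is labeled by a pair $(i, R)$ with $R \in \mathcal{R}_i$, and its vertex set is $R \subseteq [n]$. Under the correspondence $w(e) := \lambda_{iR}$, constraint \eqref{eq1} reads $\lambda_i = \sum_{R \in \mathcal{R}_i} w(e_{i,R})$, so $\sum_{i=1}^k \lambda_i = \sum_{e \in E(V)} w(e)$, while constraint \eqref{eq2} with $\mu = 1$ reads $\sum_{e \ni v} w(e) \leq 1$ for every $v \in V$, i.e., $w$ is a fractional matching. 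Maximizing over all feasible allocations therefore coincides with maximizing $\sum_e w(e)$ over fractional matchings, which yields the central equality $\max \sum_i \lambda_i = \mu_f(H)$.

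For the left inequality $\nu(H) \leq \mu_f(H)$, I would exhibit an integral maximum matching $M$ in $H$ and set $w(e) = 1$ for $e \in M$, $w(e) = 0$ otherwise. Pairwise disjointness of the edges in $M$ makes this a valid fractional matching with total weight $|M| = \nu(H)$, giving $\nu(H) \leq \mu_f(H)$, which via the central equality becomes $\nu(H) \leq \max \sum_i \lambda_i$.

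For the right inequality $\mu_f(H) \leq \tau(H)$, I would run the standard weak-duality argument: let $w$ be any fractional matching attaining $\mu_f(H)$, and let $T$ be a minimum transversal, so that $|T \cap e| \geq 1$ for every edge $e$. Then
\[
\sum_{e} w(e) \;\leq\; \sum_{e} w(e)\,|T \cap e| \;=\; \sum_{v \in T} \sum_{e \ni v} w(e) \;\leq\; \sum_{v \in T} 1 \;=\; \tau(H),
\]
where the interchange of summations is double counting, and the last bound uses the fractional matching constraint at each $v \in T$.

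The step I expect to require the most care is the bijective bookkeeping underpinning the middle equality: one must remember that distinct edges in $H$ labeled by different pairs $(i, R)$ may have identical vertex sets $R$ (if the same subset of servers recovers different data files), so the weights $\lambda_{iR}$ and $\lambda_{jR}$ must be tracked separately while still contributing jointly to the vertex load $\sum_{e \ni v} w(e)$. Once this labeling is handled cleanly, the remaining two inequalities are essentially the weak-duality bounds of hypergraph matching LPs and reduce to the short double-counting arguments above.
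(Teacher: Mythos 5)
Your proposal is correct. Note that the paper does not prove this lemma at all: it is imported verbatim as Theorem~2 of \cite{kazemi2020combinatorial}, and the surrounding discussion only records the consequences $\sum_{i=1}^k\lambda_i\leq\mu_f(H)$ and $\nu(H)\leq\mu_f(H)\leq\tau(H)$. Your reconstruction is exactly the standard argument underlying that citation: the identification $w(e)=\lambda_{iR}$ turns \eqref{eq1}--\eqref{eq2} with $\mu=1$ into the fractional-matching linear program (so the maximal cumulative rate equals $\mu_f(H)$, the maximum being attained because the feasible region is a compact polytope), an integral matching is a fractional matching, and the weak-duality double count against a minimum transversal gives the upper bound; your caution that edges with the same vertex set but different labels $(i,R)$ must be counted separately is precisely why the recovery hypergraph is treated as a labeled (multi-)hypergraph, and it is handled correctly.
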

In the case where demand for some of the data symbols is zero, the service rate region is determined for the remaining data symbols. 
 The partial hypergraph $H_\mathcal{I}, \mathcal{I}\subseteq[k],$ is formed by removing those edges from $H$ which are not labeled by the standard basis vectors of the data symbols indexed by $\mathcal{I}$. This also satisfies the matching, fractional matching, and transversal number relation. We will use this fact to obtain new upper limits for the sum of service rates in our coding scheme.

The next subsection presents a demand-splitting technique also detailed in \cite{ly2025m}, and this technique is proven optimal for MDS codes in \cite{aktacs2017service}.

\subsection{A Greedy (Waterfilling) algorithm for request splitting}

 One of the critical challenges in finding the service rate regions of distributed data storage systems is designing a request-splitting strategy that optimally distributes user requests to storage nodes. The approach is inspired by the well-known greedy algorithm, which involves making the best choice at each step, with the expectation that this approach will lead to a globally optimal solution: Always serve a request by first attempting to assign it to the smallest available recovery set (or resource) that can satisfy it. This way we efficiently use the small size recovery group, keeping in mind the load at every node. 
 
 The waterfilling algorithm, as a special case of the greedy algorithm, and inspired by concepts in information theory and signal processing, has emerged as a promising approach to address this challenge. In the context of the service rate region of coded schemes, this algorithm was initially introduced in \cite{aktacs2017service} to characterize the SRR of systematic MDS codes.
 The algorithm allocates resources (such as power or bandwidth) among multiple channels to maximize overall system performance.

The waterfilling algorithm can be effectively adapted for distributed data storage by allocating user requests among storage nodes to minimize latency, balance load, and maximize throughput. It directs incoming requests to the systematic servers first, and once their capacity is reached, any remaining requests are sent to the least loaded recovery sets, excluding those with exhausted nodes. The load on a recovery set refers to the maximum load among the nodes it contains. 

This strategy comprehensively characterizes the SRR of systematic MDS codes for $n\geq 2k$. We utilize the same request-splitting approach for systematic binary Hamming codes, demonstrating that it provides the largest service rate region compared to alternative strategies.
  
  For non-systematic Hamming codes, we aim to use a greedy approach by directing requests to the smallest recovery sets with the least load. This involves keeping an index of all possible recovery sets for each object, sorting or organizing these sets by their cardinality, and potentially enhancing the approach with load-awareness to prevent hotspots.

 

\section{service rate region of Hamming Codes}\label{iii}

In this section, we study the service rate region of Hamming codes, particularly of the binary Hamming code $[2^r-1, k=2^r-1-r,3]$ denoted by Ham($r, 2$) where $r\geq2$. For $r=2$ the code Ham($2,2$) is the repetition code of length $3$ with $\Lambda=\{i\mid 0\leq i\leq 3)\}$. Therefore, the work in this paper focuses on the SRR of Ham($r,2$) for $r\geq3$.

We recall the definition of a $q$-ary Hamming code.
\begin{definition}\cite{huffman2003fundamentals}
    For $r\geq 2$, a linear code $C$ over a finite field $\mathbb{F}_q$ is called a $q$-ary Hamming code, denoted by Ham($r, q$), if the parity-check matrix of $C$ is defined by choosing for its columns a non-zero vector from each one-dimensional subspace of $\mathbb{F}_q^r.$ The code $C$ is a $\left[\frac{q^r-1}{q-1}, \frac{q^r-1}{q-1}-r, 3\right]$ linear code.
\end{definition}

The columns of a parity-check matrix of a binary Hamming code are all the non-zero vectors of $\mathbb{F}_2^r$ arranged in such a way that the $i$-th column is the binary expansion of the $i$-th number. Any other parity-check matrix of the binary Hamming code can be obtained by column permutation of the initial matrix. If the generator matrix of the Hamming code is systematic, then we say that the code is a systematic Hamming code.

Since there is exactly one systematic column (server) for each data symbol in a systematic generator matrix $G$ of Ham($r, q$), the $k$-tuple $(1, 1, \ldots, 1)$ is in the service rate region of systematic Ham($r, q$) for all $r$. The remaining $r$ parity columns of $G$ are of weight $q^{r-1}-1$. 

WLOG, we consider the standard form generator matrix for a systematic Ham($r, q$), represented as: 
\[G= \left(I_k, P\right)=(\underbrace{e_1, e_2, \ldots , e_k}_{\substack{\text{Standard basic} \\ \text{vectors in } \mathbb{F}_q^k}}, \underbrace{p_1 , p_2 , \ldots, p_r}_{\substack{\text{parity-check}\\ \text{columns}}}).\]
Throughout this section, we adopt this notation for the systematic generator matrix.

\begin{example}\label{ex1}
    The parity-check matrix for the $[7, 4, 3]$ $(r=3)$ binary Hamming code is as follows
\[H=  \left( \begin{array}{ccccccc}
  0&  0& 0 &1 &1 & 1& 1 \\
      0&  1& 1 &0 &0 & 1& 1\\
        1&  0& 1 &0 &1 & 0& 1\\
\end{array}\right) .\]
Then the corresponding generator matrix for Ham$(3, 2)$ is

\[G= \left( \begin{array}{ccccccc}
1 &1 & 1&   0&  0& 0 &0\\
      1 & 0&  0& 1 &1 &0 & 0\\
      0 &  1&  0& 1 &0 &1 & 0\\
      1 &1 & 0 & 1&   0&  0& 1\\
\end{array}\right).\]
The systematic generator matrix can be obtained by permuting the columns of this matrix.

\end{example}
The dual code of a Hamming code Ham($r, q$) is referred to as a simplex code S($r, q$) with the parameters $[\frac{q^r-1}{q-1}, r, q^{r-1}]$. Not only does the simplex code S($r, q$) have a minimum distance of $q^{r-1}$, but all the non-zero codewords in S($r, q$) have the same weight $q^{r-1}$. 

From Lemma \ref{Rset}, the support of codewords of a simplex code completely determines the structure and elements of the recovery sets of the data symbols involving the corresponding systematic Hamming coding scheme.

The following Theorem \ref{main} provides a general structure of these recovery sets in the systematic Ham($r, q$) coding scheme. 
\begin{theorem}\label{main}
    Let $C$ be a $\left[\frac{q^r-1}{q-1}, k= \frac{q^r-1}{q-1}-r,3\right]$ $q$-ary Hamming code with the systematic generator matrix $G$. Then the recovery $G$ system satisfies the following conditions:
    \begin{enumerate}
        \item A minimum recovery set has cardinality either $1$ or $q^{r-1}-1$.
        \item There are exactly $q^{r-1}$ non-singleton minimum local recovery sets of an information symbol.
        \item For an $i$-th information symbol and any other node $j$, the cardinality of the set $\{R\mid j\in R\}$, where $R\in\mathcal{R}^{min}_i$, is $(q-1)q^{r-2}$.
      
    \end{enumerate}
    \end{theorem}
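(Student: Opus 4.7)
The plan is to exploit Lemma~\ref{Rset}, which reduces the enumeration of non-singleton recovery sets of the $i$-th data symbol to the enumeration of codewords $c' \in C^\perp$ with $i \in \text{Supp}(c')$, together with the defining property of the simplex code $C^\perp = S(r,q)$: every non-zero codeword has weight exactly $q^{r-1}$. Part~(1) follows immediately. By Lemma~\ref{dperp}, any non-singleton recovery set has cardinality at least $d^\perp - 1 = q^{r-1} - 1$; conversely, for any codeword $c' \in C^\perp$ containing $i$ in its support, the set $R = \text{Supp}(c') \setminus \{i\}$ has cardinality $q^{r-1} - 1$ and, by Lemma~\ref{Rset}, is a recovery set for the $i$-th symbol. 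Since it attains the smallest allowable non-singleton size, it is automatically a minimum recovery set.

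For Parts~(2) and (3) the key step is to set up a bijection between non-singleton minimum recovery sets of the $i$-th symbol and distinct supports of codewords of $C^\perp$ that contain position~$i$. I would argue that two non-zero codewords of $S(r,q)$ share the same support if and only if they are non-zero scalar multiples of one another: writing codewords as $xH$ with $x \in \mathbb{F}_q^r$, the support of $xH$ is exactly the set of columns of $H$ lying outside the hyperplane $x^\perp$, which determines $\langle x \rangle$ uniquely because distinct columns of $H$ represent distinct one-dimensional subspaces of $\mathbb{F}_q^r$. Hence each support corresponds to a scalar orbit of size exactly $q-1$. Part~(2) then follows by applying Lemma~\ref{ithdata}: there are $(q-1) q^{r-1}$ codewords of $C^\perp$ with non-zero $i$-th coordinate, and dividing by $q-1$ yields $q^{r-1}$ distinct supports containing $i$, and hence $q^{r-1}$ non-singleton minimum recovery sets.

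For Part~(3) I would extend this count by inclusion-exclusion on the two conditions $x \cdot h_i \neq 0$ and $x \cdot h_j \neq 0$, where $h_i$, $h_j$ are the columns of $H$ at positions $i$ and $j$. These are linearly independent because they represent distinct one-dimensional subspaces, so the number of $x \in \mathbb{F}_q^r$ satisfying both inequalities is $q^r - 2 q^{r-1} + q^{r-2} = (q-1)^2 q^{r-2}$. Dividing by $q-1$ yields $(q-1) q^{r-2}$ scalar orbits, and therefore the same number of minimum recovery sets of the $i$-th symbol that contain~$j$. The main obstacle I anticipate is the careful set-up of the support-to-scalar-orbit bijection when $q > 2$, which requires working explicitly with the realization $C^\perp = \{xH : x \in \mathbb{F}_q^r\}$ rather than treating codewords abstractly; once that correspondence is pinned down, each of the three parts reduces to an elementary hyperplane-counting exercise.
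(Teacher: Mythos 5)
Your proposal is correct, and its skeleton coincides with the paper's: both prove part (1) from Lemma~\ref{dperp} plus the constant weight $q^{r-1}$ of the simplex code, and both reduce parts (2)--(3) to enumerating dual codewords with prescribed nonzero coordinates, modulo scalars. Where you diverge is in how the enumeration is carried out. The paper stays inside the codeword domain: it applies Lemma~\ref{ithdata} to count the $q^{r-1}$ codewords with $c_i=1$, and for part (3) it decomposes $S(r,q)$ into the subcode $D=\{c: c_i=0\}$ and its cosets, counting the $(q-1)q^{r-2}$ codewords in the coset $c_i=1$ with nonzero $j$-th coordinate. You instead parametrize codewords as $xH$, prove explicitly that equal supports force $\langle x\rangle$ to coincide (support $=$ complement of the hyperplane $x^\perp$, and the hyperplane determines $\langle x\rangle$ since the columns of $H$ exhaust the projective points), and then count $x$ with $x\cdot h_i\neq 0$, $x\cdot h_j\neq 0$ by inclusion--exclusion, dividing by the orbit size $q-1$. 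The two counts are of course equivalent, but your route buys two things: it treats parts (2) and (3) by one uniform mechanism, and it makes rigorous the distinctness of the recovery sets obtained from different codewords with $c_i=1$, a point the paper asserts without spelling out (inside that coset it is immediate once one knows that equal supports imply proportionality, which is exactly your hyperplane argument). The paper's coset argument, in turn, avoids the explicit projective bookkeeping and leans only on the already-stated Lemma~\ref{ithdata}, at the cost of a slightly implicit "each coset has the same count" step. Both are valid; yours is marginally more self-contained on the distinctness/exhaustiveness of the support-to-recovery-set correspondence.
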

\begin{proof}
    To prove the first statement, we use the fact that the minimum distance of the dual code of $C$ is $q^{r-1}$. Due to Lemma \ref{dperp}, the cardinality of a recovery set $R$ for the $i$-th information symbol is either $1$ or at least $q^{r-1}-1.$ As all the codewords of the dual code have weight $q^{r-1}-1$, we have $|R|=1$ or $|R|\leq q^{r-1}-1$. This proves the first statement.
    
    Any codeword of S($r, q$) with a non-zero entry at the $i$-th place is associated with a recovery set for the $i$-th coordinate. From Lemma \ref{ithdata}, we can say that there are exactly $q^{r-1}$ codewords in the simplex code S($r, q$) which have $1$ at the $i$-th place. These codewords provide exactly $q^{r-1}$ distinct recovery sets of cardinality $q^{r-1}-1$ for the $i$-th coordinate. Define 
    \[\mathrm{S}=\{\text{Supp}(c)\setminus\{i\}\mid c\in S(r, q), c_i=1\}.\]
    
    Since any codeword with $\alpha\in \mathbb{F}_q^*\setminus\{1\}$ at the $i$-th place is an $\mathbb{F}_q$-linear multiple of a codeword with $c_i=1$, considering the $i$-th server systematic, we have \[\mathrm{S}\cup\{i\}=\mathcal{R}_i^{\min}.\]
Therefore, the cardinality of the set containing non-singleton recovery sets of the $i$-th information node is $|\mathcal{R}_i^{\min}|-1=q^{r-1}.$

    Next, fix $i$, and consider all the codewords of S($r,q$) with the $i$-th coordinate zero; then the set of such codewords forms a subcode of S($r,q$), say $D$. The subcode $D$ has $q^{r-1}$ codewords, and \[\text{S}(r, q)=D \bigcup_{\substack{c_i=\alpha\\\alpha\in\mathbb{F}_q*}}(c+D).\]

The number of codewords in $D$ that have a zero in the $j$-th coordinate is $q^{r-2}$ by Lemma \ref{ithdata}. Consequently, the number of codewords in $D$ with a non-zero $j$-th coordinate is $q^{r-1} - q^{r-2}$. Furthermore, due to the linearity of the code $C$, each additive coset of $D$ also has $q^{r-1} - q^{r-2}$ codewords with a non-zero $j$-th coordinate. 
     
    Thus, the set of the support of the $q^{r-2}(q-1)$ codewords in the coset $c+D$ $(c_i=1)$, with the $j$-th coordinate non-zero, gives all the distinct recovery sets of the $i$-th data symbol containing the $j$-th node. Hence the result.
\end{proof}

\begin{remark}\label{nonG}
    For a non-systematic generator matrix $G$ of Ham($r, q$), recovery sets of data symbols can be of different sizes, but if there is a systematic server in $G$, then the recovery sets of the data symbol corresponding to this systematic column are determined in accordance with Theorem \ref{main}.
\end{remark}

\begin{example}\label{exr3}
    For $r=3$, we have the $[7,4,3]$ binary Hamming code whose generator and parity-check matrices are given in Example \ref{ex1}. Consider a message vector $u=(a, b, c, d)$. The associated codeword is $u\cdot G=(a+b+d, a+c+d, a, b+c+d, b, c, d)$. 
Then the recovery $G$ system for this code is given as follows:
\[\mathcal{R}^{min}(G)=\{\mathcal{R}_a, \mathcal{R}_b, \mathcal{R}_c, \mathcal{R}_d\}, \quad \text{where}\]
\[ 
\mathcal{R}_a=\{(3), (1,5,7), (2,6,7), (2,4,5), (1,4,6)\},\]
\[\mathcal{R}_b=\{(5), (1,3,7), (4,6,7), (2,3,4), (1,2,6)\},\]
\[\mathcal{R}_c=\{(6), (1,2,5), (2,3,7), (4,5,7), (1,4,3)\},\]\[\mathcal{R}_d=\{(7), (1,3,5), (2,3,6), (4,5,6), (1,2,4)\},\]
which satisfies the above theorem.
\end{example}
The next corollary is a consequence of Theorem \ref{main} together with Lemma \ref{mindis} and Remark \ref{avail}. The upper bound is due to the distance, and the lower bound is due to the availability.

\begin{corollary}
    In Ham($r,q$),  the size $\delta(G)$ of the maximal achievable $\delta$-simplex satisfies $\lceil\delta(G)\rceil \leq 3$. Moreover, in the case of systematic Hamming codes, we also have $\lfloor \delta(G) \rfloor \geq 2$.
\end{corollary}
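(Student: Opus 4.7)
The plan is to treat the two inequalities separately, since the upper bound is a direct corollary of the minimum distance bound while the lower bound follows from availability.

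\textbf{Upper bound.} I would invoke Lemma \ref{mindis} directly. Since Ham$(r,q)$ has minimum distance $d=3$, the lemma yields $\lceil \delta(G) \rceil \leq d = 3$ for any generator matrix $G$ of the code, with no additional work required.

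\textbf{Lower bound (systematic case).} Here the plan is to exhibit, for each data symbol $i$, two disjoint recovery sets in $\mathcal{R}_i^{\min}$, and then directly allocate unit rate to each. Theorem \ref{main} furnishes exactly this structure: the singleton $\{i\}$ (the systematic server) together with $q^{r-1}$ non-singleton minimum recovery sets of cardinality $q^{r-1}-1$. Each such non-singleton set is of the form $\text{Supp}(c)\setminus\{i\}$ for a codeword $c$ of the simplex code $S(r,q)$ with $c_i = 1$; in particular it does not contain the index $i$. Fixing any one such non-singleton $R$, the pair $\{i\}$ and $R$ are disjoint recovery sets for the $i$-th data symbol. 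Assigning weight $1$ to each in the definition of a valid allocation, every node in $\{i\}\cup R$ is used exactly once, so the capacity constraint $\mu = 1$ is respected. This produces a valid allocation for the demand vector $2e_i$, and hence $\lambda_i^*(G)\geq 2$. Since the argument is uniform in $i$, we conclude $\delta(G)=\min_i \lambda_i^*(G)\geq 2$, i.e. $\lfloor \delta(G)\rfloor \geq 2$. Equivalently, this is a direct instance of Remark \ref{avail} applied with parameter $t = 1$.

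There is no real obstacle in this proof; both halves are short appeals to already-established facts. The only point that warrants a line of justification is that the non-singleton recovery sets produced by Theorem \ref{main} never contain the coordinate $i$, which is transparent from their construction as $\text{Supp}(c)\setminus\{i\}$, and this is precisely what makes the systematic singleton $\{i\}$ disjoint from each of them.
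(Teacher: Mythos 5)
Your proposal is correct and follows essentially the same route as the paper, which justifies the corollary in one line by citing Lemma \ref{mindis} for the upper bound (since $d=3$) and Remark \ref{avail} (availability of the systematic singleton $\{i\}$ together with one disjoint non-singleton recovery set from Theorem \ref{main}) for the lower bound. You merely spell out explicitly the valid allocation for $2e_i$ and the disjointness of $\{i\}$ from $\mathrm{Supp}(c)\setminus\{i\}$, which the paper leaves implicit.
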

 It is proven in \cite{ly2025m} that the upper bound $\delta(G)\leq 3$ is achievable for the systematic Ham($r, 2$) as the minimum weight codewords of S($r, 2$) form a $2$-design. We also demonstrate that the upper bound on $\delta(G)$ is indeed achievable for systematic Ham($r, 2$) through optimal request splitting. The method of directing incoming requests toward the recovery sets is crucial for this achievability.
\begin{lemma}In the systematic Ham($r, q$) with $r>3$, an optimal request-splitting strategy is first directing all incoming requests to the systematic servers upon arrival, and then assigning the remaining requests to the recovery sets with the current least load.

\end{lemma}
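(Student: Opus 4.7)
The plan is to establish optimality of the waterfilling strategy by an exchange argument that leverages the uniform incidence structure of recovery sets in systematic Hamming codes, as captured by Theorem~\ref{main}. I must show that every demand vector $(\lambda_1,\ldots,\lambda_k)$ served by some valid allocation is also served by the waterfilling allocation, and conversely that the waterfilling output is itself a valid allocation for that demand.

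First, I would show that prioritizing the systematic server $\{i\}$ for data symbol $i$ is always valid. Given any valid allocation $\{\lambda_{iR}\}$, if $\lambda_{i\{i\}}<\min(\lambda_i,1)$ then some non-singleton $R\in\mathcal{R}_i^{\min}$ satisfies $\lambda_{iR}>0$; transferring an $\epsilon$ of rate from $\lambda_{iR}$ to $\lambda_{i\{i\}}$ increases load at node $i$ by $\epsilon$ while decreasing load at each of the $q^{r-1}-1$ other nodes of $R$. When node $i$ has no slack, I would invoke Theorem~\ref{main}(3): every other data symbol $j$ has $q^{r-1}-(q-1)q^{r-2}=q^{r-2}$ minimum non-singleton recovery sets \emph{not} containing $i$, providing room to displace any $\lambda_{jR'}$ with $i\in R'$ onto such an alternative. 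Iterating these swaps produces an equivalent allocation with $\lambda_{i\{i\}}=\min(\lambda_i,1)$ for all $i$, matching the first phase of waterfilling.

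Next, once the systematic servers are saturated, each residual demand $\lambda_i':=(\lambda_i-1)^{+}$ must be routed through the $q^{r-1}$ non-singleton minimum recovery sets of data symbol $i$. By Theorem~\ref{main}(3), every parity node lies in exactly $(q-1)q^{r-2}$ of these sets, so the uniform allocation $\lambda_{iR}=\lambda_i'/q^{r-1}$ deposits load $\lambda_i'(q-1)/q$ on every parity node from data symbol $i$; summing over $i$ balances the load perfectly across parity coordinates. Any deviation from uniform raises some parity node's load strictly above this symmetric value, so the uniform static assignment is Pareto-optimal. I would then argue that the dynamic least-loaded rule, in the fluid limit, maintains equality of parity-node loads at every instant (by the transitivity of the code's monomial automorphism group acting on $\mathcal{R}_i^{\min}$), and therefore realises precisely the uniform allocation.

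The main obstacle is the first step: when node $i$ is already loaded by recovery sets of other data symbols, displacing that load without creating new bottlenecks requires a careful chain-of-swaps, or equivalently a max-flow / LP-duality certificate, built from the bipartite incidence graph between data symbols and recovery sets counted by Theorem~\ref{main}. The hypothesis $r>3$ ensures that the parity block is large enough for this displacement graph to be sufficiently connected; for smaller $r$ the number of alternative recovery sets $q^{r-2}$ becomes too small to absorb displaced flow, which is presumably why the lemma is stated with this restriction. Once Step~1 is established, Steps~2 and~3 follow from the symmetric incidence structure together with a standard convexity and continuity argument on the fluid limit of the waterfilling dynamics.
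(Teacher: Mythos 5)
Your proposal has two genuine problems. First, your Step 2 misidentifies where the residual load goes: you treat the non-singleton minimum recovery sets of symbol $i$ as if they were supported on the parity nodes, and argue that the uniform split $\lambda_{iR}=\lambda_i'/q^{r-1}$ "balances the load perfectly across parity coordinates" and is therefore Pareto-optimal. But these recovery sets have cardinality $q^{r-1}-1$ while there are only $r$ parity columns, and the hypothesis $r>3$ gives precisely $r<q^{r-1}-1$; hence every non-singleton recovery set of symbol $i$ necessarily contains systematic servers of \emph{other} data symbols, and the uniform split loads all $n-1$ other nodes --- in particular the other symbols' systematic servers, which may already be saturated by their own demands. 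This is exactly the interaction the lemma is about: the paper's proof uses $n-k=r<q^{r-1}-1$ to argue that diverting an $\epsilon$ of demand from the systematic server of $i$ onto a coded recovery set eats $\epsilon$ of capacity at some other symbol's systematic server (hence systematic-first is optimal), and then compares two coded recovery sets by their maximum node loads ($\mu-\gamma_1'>\mu-\gamma_1$) to justify route-to-least-loaded, explicitly excluding exhausted nodes. Your reading of the role of $r>3$ (connectivity of a displacement graph) is therefore off the mark, and your claim that the least-loaded dynamic "realises precisely the uniform allocation" is false in the multi-object regime; the symmetric uniform split is the right picture only when a single object has demand (that is the computation in Theorem~\ref{maxd}), not here.

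Second, you yourself flag that Step 1 --- the chain-of-swaps or max-flow/LP-duality certificate showing that any valid allocation can be rearranged to saturate the systematic servers without creating new bottlenecks --- is the main obstacle and you do not supply it. Since that step carries the whole optimality claim for the first phase, the proposal as written is not a proof but a program, and the part you do carry out (the parity-node balancing argument) would fail for the reason above. The paper's argument, while itself informal, is structured differently: a local exchange comparison at a single node pair (systematic versus coded, and more-loaded versus less-loaded coded set), with the counting facts of Theorem~\ref{main} entering only through the size bound $q^{r-1}-1>r$ and the incidence count $(q-1)q^{r-2}$ used to quantify the uniform split over equally loaded sets.
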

\begin{proof}
    Let $\lambda_i < \mu$ denote the incoming request rate for the $i$-th data object, and let $t_i$ represent the load assigned to the corresponding systematic server. Consequently, $\lambda_i-t_i$ is the load redirected to another recovery set (i.e., towards $q^{r-1} - 1$ nodes present in the recovery set). By redistributing the load in this manner, the systematic server's load is reduced by $\lambda_i-t_i$, while each of the $q^{r-1} - 1$ coded nodes absorbs an additional $\lambda_i-t_i$ load.  
    
Since $r > 3$, for Ham($r, q$), we have $n-k=r < q^{r-1}-1$, implying that at least one server in any non-singleton recovery set of the $i$-th data file is a systematic server for a different $j$-th data file. The capacity of the systematic server serving the $j$-th file is exhausted by $\lambda_i-t_i$; therefore, this redistribution of $\lambda_i-t_i$ reduces the maximum service rate for the $j$-th file on its systematic node by $\lambda_i-t_i$. Consuming the non-singleton recovery set's capacity potentially reduces maximum service for other data files. Thus, it is optimal to prioritize sending incoming requests to systematic servers whenever they have available capacity.

Once the systematic server for a data object is saturated, requests are directed towards the recovery set with the least load. Consider two non-singleton recovery sets $R_1$ and $R_2$ of a data symbol. Let the loads on the nodes in set $R_1$ be $\gamma_1 \geq\gamma_2\geq\cdots \geq\gamma_{q^{r-1}-1}$, and in the set $R_2$ be $\gamma_1' \geq\gamma_2'\geq\cdots\geq\gamma_{q^{r-1}-1}'$, such that $\gamma_1>\gamma_1'$. The set $R_2$, having the smaller load among the two, is capable of serving at most $\mu-\gamma_1'$ requests. This exceeds $\mu-\gamma_1$ requests that $R_1$ is capable of serving.
In particular, it cannot
be optimal to place the job on a more loaded recovery set
while a less loaded one exists. Hence, the greedy ``route-to-least-loaded” rule is optimal among recovery sets. If there are $m$ recovery sets with the same load, say $\gamma$, then a maximum of $\left(\frac{1-\gamma}{q^{r-2}}\right)m$ request rate is distributed uniformly across these $m$ recovery sets for maximum service.

Hence, the waterfilling algorithm ensures an optimal allocation of requests by first directing the requests to systematic servers and minimizing the impact on the overall service rate.
\end{proof}
The following result indicates that the size of the maximal achievable simplex for systematic Ham($r, q$) is $\left(1+\frac{q}{q-1}\right)$. It has very recently come to our knowledge that this result can be obtained from Theorem $1$ of \cite{ly2025m}, where the authors utilized a bound for the maximum achievable rate and demonstrated its achievability for the same coding scheme.
\begin{theorem}\label{maxd}
    In the SRR of the systematic $q$-ary Hamming code for all $r\geq 3$, the maximal achievable demand for a single data object is $\left(1+\frac{q}{q-1}\right)$, i.e., $\lambda_i^*=1+\frac{q}{q-1}$ for all $i\in[k]$.
\end{theorem}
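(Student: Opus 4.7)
The plan is to treat $\lambda_i^*=\max\{x:xe_i\in\Lambda(G)\}$ as the fractional-matching number of the partial recovery hypergraph generated by $\mathcal{R}_i^{\min}$, since all other demands can be set to zero. Theorem \ref{main} supplies the full incidence data I will need: apart from the singleton $\{i\}$, there are exactly $q^{r-1}$ non-singleton recovery sets, each of size $q^{r-1}-1$, none of which contains $i$, and every node $v\neq i$ lies in exactly $(q-1)q^{r-2}$ of them.

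For the upper bound I would sum the node-capacity constraint $\sum_{R\ni v}\lambda_{iR}\le 1$ over all $n-1$ nodes $v\neq i$ and swap the order of summation. Since each non-singleton $R$ contributes $|R|=q^{r-1}-1$ to that sum, this gives
\begin{equation*}
(q^{r-1}-1)\sum_{R\neq\{i\}}\lambda_{iR}\;\le\;n-1\;=\;\frac{q^r-1}{q-1}-1\;=\;\frac{q(q^{r-1}-1)}{q-1},
\end{equation*}
hence $\sum_{R\neq\{i\}}\lambda_{iR}\le q/(q-1)$. Combined with the constraint at node $i$, which only sees $\lambda_{i\{i\}}$ and therefore forces $\lambda_{i\{i\}}\le 1$, I conclude $\lambda_i\le 1+q/(q-1)$.

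For achievability I would exhibit the symmetric allocation $\lambda_{i\{i\}}=1$ and $\lambda_{iR}=q/((q-1)q^{r-1})$ for every non-singleton $R\in\mathcal{R}_i^{\min}$. Summing yields the total rate $1+q^{r-1}\cdot q/((q-1)q^{r-1})=1+q/(q-1)$, and by Theorem \ref{main}(3) the load at any node $v\neq i$ is $(q-1)q^{r-2}\cdot q/((q-1)q^{r-1})=1$, so every capacity constraint is satisfied with equality and the allocation is feasible.

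The only subtle point is verifying that the non-singleton recovery sets really do avoid node $i$, which is what decouples the constraint at node $i$ from the rest of the LP; this follows from the dual-code description in Lemma \ref{Rset}, since any non-singleton $R$ is $\text{Supp}(c')\setminus\{i\}$ for some $c'\in C^\perp$ with $i\in\text{Supp}(c')$. Beyond this bookkeeping, the argument is a tight linear programming calculation driven entirely by the uniform-regular structure from Theorem \ref{main}, so I do not anticipate any genuine obstacle.
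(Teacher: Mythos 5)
Your proof is correct, and its achievability half is essentially the paper's: both saturate the systematic node with one unit of demand and then spread the rest uniformly, assigning $\frac{1}{(q-1)q^{r-2}}$ to each of the $q^{r-1}$ non-singleton minimum recovery sets and using Theorem~\ref{main}(3) to check every node's load is exactly $1$. Where you genuinely diverge is the converse. The paper does not sum capacity constraints; it invokes the optimality of the waterfilling (greedy) splitting rule established in the lemma preceding Theorem~\ref{maxd} (stated there only for $r>3$) together with the observation that the above allocation leaves no residual capacity at any node, and concludes from this that $\lambda_i^*$ cannot exceed $1+\frac{q}{q-1}$. Your converse is instead a direct LP/counting bound: in the minimum recovery system the only recovery set of object $i$ containing node $i$ is $\{i\}$ (so $\lambda_{i\{i\}}\le 1$), every non-singleton minimum recovery set has size exactly $q^{r-1}-1$ and avoids $i$, and summing the capacity constraints over the other $n-1$ nodes gives $(q^{r-1}-1)\sum_{R\neq\{i\}}\lambda_{iR}\le n-1=\frac{q(q^{r-1}-1)}{q-1}$, hence $\sum_{R\neq\{i\}}\lambda_{iR}\le\frac{q}{q-1}$. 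This route is more self-contained: it does not rest on the optimality of a particular request-splitting strategy, it works uniformly for all $r\ge 3$, and it uses only Theorem~\ref{main} plus the reduction $\Lambda(\mathcal{R}^{all}(G))=\Lambda(\mathcal{R}^{\min}(G))$ already recorded in the preliminaries. What the paper's argument buys is reuse of the waterfilling framework it develops for the rest of the section; what yours buys is an explicit, easily verifiable upper bound that closes the converse without that lemma.
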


\begin{proof}
    We show that the storage system with systematic Ham($r, q$) coding scheme is able to serve the request rates $\left(1+\frac{q}{q-1}\right)e_i$ for all $i\in[k]$. We use waterfilling algorithm to split the requests to the recovery sets. Since each server's capacity in our storage model is one, at most one request for an information symbol can be served using the corresponding systematic node.
    
From Theorem \ref{main}, we know that every node is in $q^{r-2}(q-1)$ recovery sets of a data object. The incoming request to download a data object is initially directed towards its systematic node. After the saturation of the systematic node, the optimal allocation is to send $\frac{1}{q^{r-2}(q-1)}$ of the request rate to every recovery set of the data object. This splitting strategy fulfills a maximum of $\frac{q^{r-1}}{q^{r-2}(q-1)}$ requests to download the required object without leaving any more service capacity in any node. 
Then we have 
\begin{equation*}
\left(1+\frac{q}{q-1},0,\ldots, 0\right), \left(0,1+\frac{q}{q-1},0,\ldots, 0\right), \ldots, \left(0,0, \ldots, 0,1+\frac{q}{q-1}\right)\in \Lambda(G)
\end{equation*}

Therefore, the maximum optimal allocation to serve a data object is $\lambda_i^*=1+\frac{q}{q-1}$ for any $i\in[k]$, when the request for the other data objects is zero.\end{proof}

 The result indicates that the maximum demand does not depend on the parameter $r$ but on the code alphabet. Since the service rate region is a convex polytope (see \cite{alfarano2024service}), the next corollary follows.

\begin{corollary} The convex hull of the set $\{(3,0,\ldots, 0), (0,3,0,\ldots, 0), \ldots,(0,\ldots,0,3)\} \subseteq \mathbb{R}^{2^r-1-r}$ lies in the SRR of the systematic binary Hamming code for all $r$, i.e., $\delta(G)=3$.
\end{corollary}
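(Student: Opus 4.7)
The plan is to deduce the statement directly from Theorem \ref{maxd} together with the convexity of the service rate region. First, I would specialize Theorem \ref{maxd} to the binary case $q=2$: substituting gives $\lambda_i^*(G) = 1 + \frac{2}{2-1} = 3$ for every $i \in [k]$. Consequently $3 e_i \in \Lambda(G)$ for every $i \in [k]$, which supplies all the vertices of the simplex appearing in the statement.

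Next, I would invoke the result of Alfarano et al.\ \cite{alfarano2024service} that $\Lambda(G)$ is a convex polytope in $\mathbb{R}^k_{\geq 0}$. Since the vertices $3e_1, 3e_2, \ldots, 3e_k$ all lie in $\Lambda(G)$, convexity immediately implies that their convex hull is contained in $\Lambda(G)$. Explicitly, any point of this hull has the form $\sum_{i=1}^k \alpha_i (3 e_i)$ with $\alpha_i \geq 0$ and $\sum_i \alpha_i = 1$, and this convex combination lies in $\Lambda(G)$ by convexity.

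Finally, I would derive the equality $\delta(G) = 3$. The inclusion of the convex hull of $\{3e_1, \ldots, 3e_k\}$ in $\Lambda(G)$ means in particular that the $3$-simplex $\{(\lambda_1, \ldots, \lambda_k) \in \mathbb{R}^k_{\geq 0} : \sum_i \lambda_i \leq 3\}$ is contained in $\Lambda(G)$, so $\delta(G) \geq 3$. Conversely, the preceding corollary (obtained from Lemma \ref{mindis} applied to the minimum distance $d=3$ of Ham$(r,2)$) gives $\lceil \delta(G) \rceil \leq 3$, hence $\delta(G) \leq 3$. Combining these yields $\delta(G) = 3$.

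Since both ingredients (the per-coordinate maximum $\lambda_i^* = 3$ from Theorem \ref{maxd} and the convex polytope structure of $\Lambda(G)$) are already established, there is no genuine obstacle; the only point requiring a line of justification is that \emph{convexity alone} suffices to promote the $k$ extremal points $3 e_i$ to the full simplex they span, without having to re-run the waterfilling argument on a general convex combination.
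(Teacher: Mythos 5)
Your proposal is correct and follows essentially the same route as the paper, which derives the corollary directly from Theorem \ref{maxd} (giving $\lambda_i^*=3$ for $q=2$, hence $3e_i\in\Lambda(G)$ for all $i$) together with the convex-polytope structure of $\Lambda(G)$ from \cite{alfarano2024service}. The only cosmetic remark is that to pass from the hull of the points $3e_i$ to the full region $\sum_i\lambda_i\leq 3$ you implicitly also use $0\in\Lambda(G)$, which is trivial, and with the paper's definition $\delta(G)=\min_i\lambda_i^*(G)$ the equality $\delta(G)=3$ already follows from Theorem \ref{maxd} without invoking Lemma \ref{mindis}.
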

For a non-systematic matrix $G$ with a systematic column for some $i \in [k]$, a parallel argument can be used to conclude that $\lambda_i^* = \frac{2q-1}{q-1}$ when $\lambda_j = 0$ for all $j \neq i\in[k]$ as discussed in Remark \ref{nonG}. Additionally, the allocation of requests to access such a data symbol can also be done using the waterfilling algorithm to get the maximum output.

\subsection{Using Hypergraphs and matching to find SRR of systematic Ham(\texorpdfstring{$r,2$}{})}

In this section, we associate the binary Hamming codes with hypergraphs that illustrate the recovery structure of the code. In other words, the recovery $G$ system of Ham($r, 2$) is seen as a hypergraph with $(2^r-1)$ vertices. The recovery hypergraph is constructed with vertices corresponding to each column of the generator matrix $G$, where each recovery set of a data symbol forms a hyperedge.

Let the recovery hypergraph corresponding to the generator matrix $G$ be denoted as $\Gamma_G$. Each hyperedge (recovery set) of $\Gamma_G$ is labeled as the corresponding standard basis vector of the data symbol it recovers. A transversal and the transversal number of the hypergraph $\Gamma_G$ play a crucial role in proving the following theorem.

\begin{theorem}\label{odd}
    Suppose $G$ is a generator matrix for the code Ham($r,2$). Let $\Gamma_G$ be the associated recovery hypergraph and $O_w$ be the number of odd weight columns in $G$. For a $k$-tuple $(\lambda_1, \lambda_2, \ldots, \lambda_k)\in \Lambda(G)$, the following holds:
    \begin{enumerate}
        
       \item\label{1Ow} $\sum_{i=1}^k \lambda_i \leq O_w$,
        \item $\nu(\Gamma(G))\geq $ number of systematic columns in $G$,
        \item if $G$ is systematic, then  \begin{equation}\label{bound}
            \sum_{i=1}^k\lambda_i\leq \begin{cases}
            2^{r}-1-r, \quad \text{for } r>3, \\ 5, \quad \text{for } r=3,
        \end{cases}
        \end{equation} and this bound is achievable.
       
    \end{enumerate}
  
\end{theorem}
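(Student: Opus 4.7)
The plan is to derive all three parts from the matching/transversal bounds $\sum_i \lambda_i \leq \mu_f(\Gamma_G) \leq \tau(\Gamma_G)$ of Lemma~\ref{threeB}, combined with the structural description of minimum recovery sets in Theorem~\ref{main}.

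For part~(\ref{1Ow}), I would exhibit the set $O$ of odd-weight columns of $G$ as a transversal of $\Gamma_G$ and then sum the capacity constraints over it. The key observation is that every minimum recovery set $R$ for a symbol $i$ contains at least one column of $O$: working over $\mathbb{F}_2$, the minimality of $R$ forces the relation $\sum_{j\in R} G_j = e_i$ to use \emph{every} index in $R$, and pairing this with the all-ones functional on $\mathbb{F}_2^k$ yields $\sum_{j \in R} w(G_j) \equiv 1 \pmod{2}$, so the number of odd-weight columns in $R$ is odd (in particular, positive). Summing \eqref{eq2} over $v \in O$ and swapping the order of summation then gives
\[
\sum_{i=1}^k \lambda_i \;\leq\; \sum_{i, R}|R \cap O|\,\lambda_{iR} \;=\; \sum_{v \in O}\sum_{i,\, R \ni v} \lambda_{iR} \;\leq\; |O| \;=\; O_w.
\]
Part~(2) is immediate: each systematic column $j$ is a singleton recovery set, and distinct systematic columns produce pairwise disjoint singletons, which is a matching of the stated size.

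For part~(3) in the regime $r > 3$, the plan is to show that the $k$ systematic columns themselves form a transversal of $\Gamma_G$. By Theorem~\ref{main} every non-singleton minimum recovery set has cardinality $2^{r-1} - 1$, and because $G$ contains only $r < 2^{r-1} - 1$ parity columns when $r > 3$, such a set necessarily contains a systematic column. Hence $\tau(\Gamma_G) \leq k$, yielding $\sum_i \lambda_i \leq k = 2^r - 1 - r$, which is attained with equality by the demand vector $(1,1,\ldots,1)$ served entirely through the singleton systematic recovery sets.

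The delicate case is $r = 3$, where $2^{r-1} - 1 = r = 3$ allows a minimum recovery set to lie entirely in the three parity columns. A direct calculation on the systematic form $G = [I_4 \mid A]$ shows that $p_1 + p_2 + p_3 = A \mathbf{1}_3$ has $i$-th coordinate equal to the mod-$2$ weight of row $i$ of $A$; since those rows are exactly the four non-zero, non-standard-basis vectors of $\mathbb{F}_2^3$, precisely one of them---namely $(1,1,1)$---has odd weight, so $p_1 + p_2 + p_3 = e_{i^*}$ for a unique symbol $i^*$. Thus $\{p_1, p_2, p_3\}$ is a recovery set for exactly that one symbol, and it is the only minimum recovery set disjoint from the systematic columns. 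Appending any single parity column therefore produces a transversal of size $5$, giving $\sum_i \lambda_i \leq 5$. For achievability, splitting $\lambda_{i^*} = 2$ across the singleton $\{i^*\}$ and the parity triple $\{p_1,p_2,p_3\}$ while serving $\lambda_j = 1$ from the singleton $\{j\}$ for each $j \neq i^*$ loads every node to exactly capacity and realizes $\sum_i \lambda_i = 5$. The main obstacle is precisely this $r = 3$ bookkeeping: the proof must establish both the \emph{existence} of the parity-triple recovery set (so that $\tau > k$) and its \emph{uniqueness} (so that $\tau = 5$ and no larger), which is exactly the short parity calculation on $A$ described above.
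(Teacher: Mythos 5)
Your proof is correct and takes essentially the same route as the paper: the odd-weight columns form a transversal for part (1), the systematic singletons give the matching for part (2), and for part (3) the systematic columns form a transversal when $r>3$, with the $r=3$ case needing one extra parity node and achievability shown by the same explicit allocation. Your mod-2 weight argument in part (1) and the explicit computation that $p_1+p_2+p_3=e_{i^*}$ for a unique symbol when $r=3$ simply supply details the paper asserts more briefly.
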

  \begin{proof}
        All the standard basis vectors of $\mathbb{R}^k$ are of weight one. In order to obtain these vectors as a sum of other column vectors of $G$, there must be at least one column vector with an odd weight in each recovery set. This leads us to conclude that the intersection of any hyperedge of $\Gamma(G)$ with the set of odd-weighted column vectors of $G$ is non-empty. Since the transversal number is the smallest number of vertices needed to cover all hyperedges, we have $\tau(\Gamma(G))\leq O_w.$ Then, by Lemma \ref{threeB}, the inequality in \eqref{1Ow} follows. 
        
        However, in the case of a systematic generator matrix, we have $O_w=n,$ which does not lead to any concrete results.
        
The set of all systematic nodes in $G$ forms a set of disjoint edges of the recovery hypergraph $\Gamma(G)$. Therefore, the matching number is greater than or equal to the number of systematic nodes in $G$.

        If $G$ is systematic, then we have each systematic node as an edge of the hypergraph, leading to the containment of all the systematic nodes in any vertex cover. According to Theorem \ref{main}, all the non-singleton recovery sets have cardinality $2^{r-1}-1$, which is greater than $r$ for $r>3$. Therefore, there is no recovery set which contains only non-systematic columns. Consequently, it is impossible to form a recovery set that consists solely of non-systematic columns. This means that the set of all $k$ systematic columns constitutes a vertex cover for the associated hypergraph, resulting in $\tau(\Gamma(G))=k=2^r-1-r,$ for $r>3.$ To achieve the bound in ($\ref{bound}$), for each data symbol, a request is sent to the corresponding systematic node and served by it.
        
        However, in the case of $r=3$, the structure of the parity-check matrix allows for a recovery set made entirely of the three non-systematic nodes. As a result, any vertex cover includes one non-systematic column along with all systematic columns, leading to $ \tau(\Gamma(G)) = k + 1 = 5 $. Thus, the bound in ($\ref{bound}$) is achieved by serving one request for each data symbol by the corresponding systematic node, and serving one additional request through the recovery set with all the non-systematic nodes.
 This completes the proof.
  \end{proof} 

  \begin{corollary}
      For a non-systematic generator matrix $G$ of Ham$(3,2)$, the size of the maximal matching simplex for $\Lambda(G)$ is $O_w$.
  \end{corollary}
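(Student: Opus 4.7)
The plan is to prove this in two parts. The upper bound $\mu_f(\Gamma_G)\le O_w$ is immediate from Theorem~\ref{odd}(\ref{1Ow}): every $(\lambda_1,\ldots,\lambda_k)\in\Lambda(G)$ satisfies $\sum_{i=1}^{k}\lambda_i\le O_w$, and the size of the maximal matching simplex equals $\mu_f(\Gamma_G)=\max\{\sum_{i=1}^{k}\lambda_i:(\lambda_1,\ldots,\lambda_k)\in\Lambda(G)\}$.

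For achievability we would build an integer matching in $\Gamma_G$ of size $O_w$; by Lemma~\ref{threeB} this forces $\mu_f(\Gamma_G)\ge\nu(\Gamma_G)\ge O_w$, closing the gap. Split the odd-weight columns of $G$ into $O_1$ (weight-one columns, which are exactly the systematic columns of $G$, since in $\mathbb{F}_2^{k}$ the only weight-one vectors are the standard basis vectors $e_i$) and $O_3$ (weight-three columns), so that $|O_1|+|O_3|=O_w$. For every $j\in O_1$ with $g_j=e_i$, put the singleton $\{j\}\in\mathcal{R}_i^{\min}$ in the matching. For every $j\in O_3$ with $g_j=e_{i_1}+e_{i_2}+e_{i_3}$, exhibit a minimum recovery set $R_j\in\mathcal{R}_{i_\ell}^{\min}$ (for some $\ell\in\{1,2,3\}$) which contains $j$ and, besides $j$, uses only even-weight columns of $G$; since $d^\perp=4$ one may take $|R_j|\le 3$. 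If the $R_j$'s and the singletons are pairwise disjoint, the resulting matching has size $|O_1|+|O_3|=O_w$.

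The main obstacle is guaranteeing the pairwise disjoint $R_j$'s actually exist for every non-systematic $G$. We would address this through a short case analysis on the profile $(|O_1|,|O_3|)$: since $G$ is non-systematic $|O_1|\le 3$, and there are exactly $7-O_w$ even-weight columns of $G$ available to fill the companion positions of each $R_j$. For each $j\in O_3$ the required companion vector $g_j+e_{i_\ell}=e_{i_m}+e_{i_n}$ is even-weight, and it must be realized either as a single even-weight column of $G$ (giving $|R_j|=2$) or as a sum of two such columns (giving $|R_j|=3$); the structure of Ham$(3,2)$---in particular the fact that its dual $S(3,2)$ is a $2$-design with block size $4$---should force such a realization in every non-systematic profile. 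A cleaner alternative route is LP duality: the indicator vector of odd-weight columns is a fractional vertex cover of weight $O_w$ (by the parity identity $\mathrm{wt}(\sum_j g_j)\equiv\sum_j\mathrm{wt}(g_j)\pmod 2$ used in the proof of Theorem~\ref{odd}(\ref{1Ow})), and it then suffices to certify its optimality by producing, via complementary slackness, a fractional matching of value $O_w$.
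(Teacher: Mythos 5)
Your upper-bound half coincides with what the paper actually relies on: the corollary carries no separate proof and is meant as an immediate consequence of statement \ref{1Ow} of Theorem \ref{odd} together with Lemma \ref{threeB} (identifying the size of the maximal matching simplex with $\mu_f$), with attainment exhibited only for the particular matrix of Example \ref{NSExample}. The gap in your proposal is precisely in the half that needs an argument. You never prove that the pairwise disjoint sets $R_j$ exist; you defer this to a case analysis on $(|O_1|,|O_3|)$ that is not carried out, and your fallback route is circular: certifying optimality of the fractional cover by complementary slackness presupposes a fractional matching of value $O_w$, which is exactly the assertion to be proved. (A side issue: the appeal to $d^\perp=4$ via Lemma \ref{dperp} is not available here, since that lemma concerns systematic $G$; for non-systematic $G$ recovery sets of size $2$ do occur, as in Example \ref{NSExample}.)

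More importantly, no case analysis can close the gap at the stated level of generality. For any generator matrix of Ham$(3,2)$ the seven columns are distinct points of the $8$-point affine hyperplane $\{v\in\mathbb{F}_2^4: f(v)=1\}$ for some nonzero functional $f$ (because the all-ones codeword forces $\langle x,g_j\rangle=1$ for all $j$ for the corresponding $x$, and the dual distance $4$ rules out zero or repeated columns); hence $O_w\in\{3,4,7\}$, and $O_w=7$ also occurs for non-systematic $G$. Concretely, multiplying the systematic $G_0$ on the left by an invertible $A$ whose columns all have odd weight preserves the parity of column weights, and $A$ can be chosen to send the one missing odd-weight vector to $e_1$; the resulting $G$ has columns $e_2,e_3,e_4$ and the four weight-$3$ vectors of $\mathbb{F}_2^4$, so it is non-systematic with $O_w=7$ and with $7-O_w=0$ even-weight columns, making your companion sets $R_j$ impossible to form. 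In fact the equality itself fails there: demand routed to recovery sets meeting the three weight-one columns is at most $3$, while demand routed to sets avoiding them must use at least three of the four weight-$3$ columns per unit (any recovery set consists of an odd number of odd-weight columns), so $\sum_i\lambda_i\le 3+\tfrac{4}{3}=\tfrac{13}{3}<7=O_w$, and no matching of size $O_w$ can exist. So your expectation that the $2$-design structure forces the construction ``in every non-systematic profile'' is incorrect; a correct write-up must restrict to the cases with $f$ different from the all-ones functional (equivalently $O_w\in\{3,4\}$, as in Example \ref{NSExample}) before attempting your matching construction, or else weaken the equality to the inequality of Theorem \ref{odd}, statement \ref{1Ow}.
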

 \begin{remark}\cite{aktacs2021service}(Proposition 5)
    The result \ref{1Ow} of Theorem \ref{odd} can also be concluded considering the hyperplane $\mathcal{H}$ of the projective space $PG_n(\mathbb{F}_2)$, defined by the equation
\[
x_1 + \cdots + x_k = 0.
\]

It is clear that none of the standard basis vectors $e_i$ is in $\mathcal{H}$, as their weight is $1$ (odd). Moreover, $|\mathcal{H}|=n-O_w$, and for any $\lambda \in \Lambda(G)$, it yields 
\[
\sum_{i=1}^{k} \lambda_i \leq |G\setminus\mathcal{H}|=O_w.
\]

 \end{remark}

\begin{example}\label{NSExample}
     Let $(a, b, c, d)$ be the data symbols and let $G$ be a generator matrix of the Ham$(3, 2)$ code, given as follows:\[G= \left( \begin{array}{ccccccc}
1 &1 & 0&   0&  1& 1 &0\\
      0 & 0&  1& 0 &1 &1 & 0\\
      1& 0 &  1&  0& 1 &0 &1 \\
      0 &1 & 1 & 1&   1&  0& 0\\
\end{array}\right).\] 
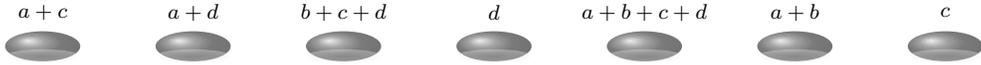
\begin{figure}[H]
    \centering
    
    \begin{tikzpicture}
    \def\labels{{"$a+c$","$a+d$","$b+c+d$","$d$","$a+b+c+d$","$a+b$","$c$"}}
    \foreach \i [count=\n from 0] in {0,1,2,3,4,5,6} {
       
        \shade[ball color=gray!40!black, opacity=0.6] (\i*2,0) ellipse (0.5 and 0.20);
       
        \shade[inner color=black!30,outer color=white,opacity=0.3] (\i*2, -0.13) ellipse (0.5 and 0.09);
        \node[font=\footnotesize] at (\i*2,0.45) {\pgfmathparse{\labels[\n]}\pgfmathresult};
    }
\end{tikzpicture}

    \caption{Distributed storage system storing $4$ files on $7$ servers (nodes) using binary Hamming scheme (non-systematic generator matrix).}
    \label{ellipse}
\end{figure}
The odd weight columns of $G$ are columns $3, 4$, and $7$. The recovery $G$ system or the sets of edges of the recovery hypergraph $\Gamma(G)$ are presented below:
\begin{align*}
    \mathcal{R}_a&=\{(1, 7), (2, 4), (3, 5), (1, 4, 5, 6), (2, 5, 6, 7), (3, 4, 6, 7), (1, 2, 3, 6)\},\\
\mathcal{R}_b &=\{(3, 4, 7), (1, 6, 7), (3, 5, 6), (2, 4, 6), (2, 5, 7), (1, 4, 5), (1, 2, 3)\},\\
\mathcal{R}_c&=\{( 7), (1, 2, 4), (1, 3, 5), (4, 5, 6), (2, 3, 6)\},\\
\mathcal{R}_d &=\{(4), (1, 2, 7), (2, 3, 5), (5, 6, 7), (1, 3, 6)\}.
\end{align*}

Since the set $\{3, 4, 7\}$ is a vertex cover of $ \Gamma(G)$, the transversal number and the matching number are $\tau(\Gamma(G))=3$, and $\nu(\Gamma(G))=3$, respectively. It follows from Lemma \ref{threeB} that $\sum_{i=1}^4\lambda_i\leq3$, and this bound is achieved by serving requests from edges $(3, 5),(7),$ and $(4)$. 

The requests are split using the greedy technique and are first directed towards the recovery sets of smallest cardinality with the least load for each of the data symbols. For data file $a$, the maximum demand served is $\lambda_a^{\max}=3$. Similarly, for files $c$ and $d$, we also have $\lambda_c^{\max}=\lambda_d^{\max}=3.$

Since in $\mathcal{R}_b$ each node is contained in three edges, to get maximum service for file $b$, $1/3$ request is sent to each edge, recovering $\frac{7}{3}$ of file $b$.
Note that $\delta(G)\ne3,$ since $\lambda_b^{\max}$ achievable is $\frac{7}{3}$.

\end{example}

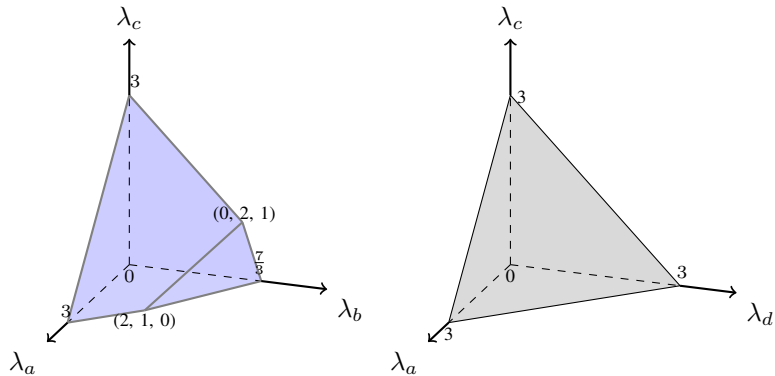
\begin{figure}[ht]
  \tdplotsetmaincoords{70}{110}
  \centering
 \begin{tikzpicture}[tdplot_main_coords, scale=0.8]
\begin{scope}[shift={(0,0,0)}]

\fill[blue!20, opacity=0.7] (3,0,0) -- (2,1,0) -- (0,{7/3},0) -- (0,0,0) -- cycle;
\fill[blue!30, opacity=0.7] (0,0,0) -- (0,{7/3},0) -- (0,2,1) -- (0,0,3) -- cycle;
\fill[blue!25, opacity=0.7] (0,0,0) -- (0,0,3) -- (3,0,0) -- cycle;
\fill[blue!20, opacity=0.7] (3,0,0) -- (0,0,3) -- (0,2,1) -- (2,1,0) -- cycle;
\fill[blue!20, opacity=0.7] (2,1,0) -- (0,2,1) -- (0,{7/3},0) -- cycle;

    \coordinate (Xfix) at (3,0,0);
    \coordinate (Yfix) at (0,7/3,0);
    \coordinate (Zfix) at (0,0,3);

    \draw[dashed] (0,0,0) -- (Xfix);
   \draw[dashed] (0,0,0) -- (Yfix);
   \draw[dashed] (0,0,0) -- (Zfix);

\draw[thick, gray] (3,0,0) -- (0,0,3);
\draw[thick, gray] (0,{7/3},0) -- (2,1,0);
\draw[thick, gray] (0,{7/3},0) -- (0,2,1);
\draw[thick, gray] (0,0,3) -- (0,2,1);
\draw[thick, gray] (2,1,0) -- (3,0,0);
\draw[thick, gray] (2,1,0) -- (0,2,1);

\node[above right] at (3.5,-0.10,0.1) {\scriptsize$3$};
\node[above left] at (0,2.6,0) {\scriptsize$\frac{7}{3}$};
\node[above] at (0,0.1,3) {\scriptsize$3$};

\node at (0,0,-0.2) {\scriptsize$0$};
\node at (2,1,-0.2) {\scriptsize (2, 1, 0)};
\node at (-0.1,2,1.1) {\scriptsize (0, 2, 1)};

    \draw[thick,->] (Xfix) -- (4,0,0) node[anchor=north east] {$\lambda_a$};
    \draw[thick,->] (Yfix) -- (0,3.5,0) node[anchor=north west] {$\lambda_b$};
    \draw[thick,->] (Zfix) -- (0,0,4) node[anchor=south]{$\lambda_c$};
\end{scope}
\end{tikzpicture}
  \begin{tikzpicture}

\tdplotsetmaincoords{70}{110}
\begin{scope}[tdplot_main_coords, scale=0.8]
    \coordinate (A) at (0,0,0);
    \coordinate (B) at (3,0,0);
    \coordinate (C) at (0,3,0);
    \coordinate (D) at (0,0,3);
    
    \fill[gray!10, opacity=0.9] (A)--(B)--(C)--cycle;
    \fill[gray!20, opacity=0.9] (A)--(B)--(D)--cycle;
    \fill[gray!30, opacity=0.9] (B)--(C)--(D)--cycle;
  
 \fill[gray!30, opacity=0.9] (A)--(C)--(D)--cycle;

   
    \draw (D)--(B)--(C)--cycle;

    \node at (0,0,-0.2) {\scriptsize 0};
    \node at (0,0.2,3) {\scriptsize 3};
    \node at (-0.1,3,0.2) {\scriptsize ${3}$};
    \node at (3,0,-0.2) {\scriptsize 3};
     
    \coordinate (Xfix) at (3,0,0);
    \coordinate (Yfix) at (0,3,0);
    \coordinate (Zfix) at (0,0,3);

    \draw[dashed] (A) -- (Xfix);
   \draw[dashed] (A) -- (Yfix);
   \draw[dashed] (A) -- (Zfix);

    \draw[thick,->] (Xfix) -- (4,0,0) node[anchor=north east] {$\lambda_a$};
    \draw[thick,->] (Yfix) -- (0, 4,0) node[anchor=north west] {$\lambda_d$};
    \draw[thick,->] (Zfix) -- (0,0,4) node[anchor=south]{$\lambda_c$};
\end{scope}
\end{tikzpicture}
\hspace{0.5cm}

    \caption{The $3$D-projection of service polytope considering $\lambda_d=0$ (top), and service polytope when $\lambda_b=0$ (bottom) in Example \ref{NSExample}.}
    \label{ExSrr}
\end{figure}

Now we direct our attention to the service rate region of the systematic Ham($r,2$), and generalize the second point of Theorem \ref{odd} for any subset of $[k]$, under the notion that the cumulative achievable service rate is less than the transversal number of the associated recovery hypergraph. The recovery hypergraph $\Gamma_G$ is defined by considering all data symbols and their recovery sets. Now, we consider partial hypergraph $\Gamma_{G_\mathcal{I}}$ of $\Gamma_G$ with respect to a subset $\mathcal{I}\subseteq [k]$ such that the hyperedges labeled with $e_j$, where $j\in[k]\setminus \mathcal{I}$ are removed from $\Gamma(G)$. This formulation is analogous to considering that the demand/service rates for the data symbols indexed by nodes in $[k] \setminus \mathcal{I}$ are zero. 

The fractional matching number of an $s$-uniform hypergraph with $n$ vertices is less than or equal to $\frac{n}{s}$. We use this result as described in \cite{aktacs2021service}, and add $(2^{r-1} - 2)$ dummy vertices to each systematic node to represent the associated hyperedge in the recovery hypergraph (or partial hypergraph). As a result, the recovery hypergraph becomes $(2^{r-1}-1)$-uniform. 

The following theorem is a consequence of the above discussion.

\begin{theorem}
    Suppose $G$ is the systematic generator matrix of Ham($r, 2$), and $(\lambda_1, \lambda_2, \ldots, \lambda_k)\in \Lambda(G)$. Let $\mathcal{I}\subseteq [k=2^r-1-r]$. Then \[\sum_{i\in\mathcal{I}}\lambda_i\leq  |\mathcal{I}|+2-\frac{\mathcal{|I|}-1}{2^{r-1}-1}.\]
\end{theorem}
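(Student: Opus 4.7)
The plan is to recognize the sum $\sum_{i\in\mathcal{I}}\lambda_i$ as the total weight of a fractional matching on a carefully chosen uniform hypergraph, and then invoke the standard inequality $\mu_f(H)\le n/s$ for $s$-uniform hypergraphs, exactly as foreshadowed in the two paragraphs preceding the statement.

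First, I would translate SRR membership into a fractional matching. Given $(\lambda_1,\ldots,\lambda_k)\in\Lambda(G)$ and a valid allocation $\{\lambda_{iR}\}$ satisfying \eqref{eq1} and \eqref{eq2}, I assign weight $w(R)=\lambda_{iR}$ to each hyperedge $R$ of the partial hypergraph $\Gamma_{G_\mathcal{I}}$ coming from some $\mathcal{R}_i^{\min}$ with $i\in\mathcal{I}$. Constraint \eqref{eq2} at each node $v$ immediately yields $\sum_{e\ni v}w(e)\le 1$, making $w$ a valid fractional matching with total weight $\sum_{i\in\mathcal{I}}\lambda_i$, so $\sum_{i\in\mathcal{I}}\lambda_i\le\mu_f(\Gamma_{G_\mathcal{I}})$.

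Next, I would make the hypergraph uniform. By Theorem \ref{main}, every hyperedge in $\mathcal{R}_i^{\min}$ has size $1$ or $2^{r-1}-1$. Augmenting each of the $|\mathcal{I}|$ singleton edges $\{i\}$ with $2^{r-1}-2$ fresh dummy vertices produces a $(2^{r-1}-1)$-uniform hypergraph $\widetilde{\Gamma}_{G_\mathcal{I}}$; since each dummy lies in exactly one edge, the only new constraint it imposes is $w(e)\le 1$, which is already implied by $w(e)\le\mu=1$, and hence $\mu_f(\widetilde{\Gamma}_{G_\mathcal{I}})=\mu_f(\Gamma_{G_\mathcal{I}})$. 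The third part of Theorem \ref{main} ensures that every node $v\in[2^r-1]$ lies in some recovery set of any fixed $i\in\mathcal{I}$ with $i\ne v$, so (assuming $\mathcal{I}\ne\emptyset$, otherwise the claim is trivial) the vertex set of $\Gamma_{G_\mathcal{I}}$ is exactly $[2^r-1]$, and $\widetilde{\Gamma}_{G_\mathcal{I}}$ has $(2^r-1)+|\mathcal{I}|(2^{r-1}-2)$ vertices in total.

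Finally, applying the $s$-uniform bound with $s=2^{r-1}-1$ yields
\[
\sum_{i\in\mathcal{I}}\lambda_i\le\frac{(2^r-1)+|\mathcal{I}|(2^{r-1}-2)}{2^{r-1}-1},
\]
and a short rewrite using $2^r-1=2(2^{r-1}-1)+1$ and $2^{r-1}-2=(2^{r-1}-1)-1$ simplifies the right-hand side to $|\mathcal{I}|+2-\frac{|\mathcal{I}|-1}{2^{r-1}-1}$, as required. The proof is almost mechanical once the setup is in place; the only conceptual subtlety, and the step I would verify most carefully, is that the dummy-vertex construction preserves the fractional matching polytope exactly, so the uniform bound can be applied without loss.
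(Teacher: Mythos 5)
Your proposal is correct and follows essentially the same route the paper intends: it converts SRR membership into a fractional matching on the partial hypergraph $\Gamma_{G_\mathcal{I}}$, pads each of the $|\mathcal{I}|$ singleton edges with $2^{r-1}-2$ dummy vertices to obtain a $(2^{r-1}-1)$-uniform hypergraph, and applies the bound $\mu_f(H)\le n/s$, which after the substitution $2^r-1=2(2^{r-1}-1)+1$ gives exactly $|\mathcal{I}|+2-\frac{|\mathcal{I}|-1}{2^{r-1}-1}$. The only cosmetic quibble is your justification that the dummy vertices do not change $\mu_f$: the cleanest phrasing is that each new vertex constraint $w(\{i\}\cup\text{dummies})\le 1$ is already implied by the existing constraint at the systematic vertex $i$, so the fractional matching polytope is unchanged.
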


Since the fractional part remains less than $1$ for $|\mathcal{I}|>1$, the size $\delta_{m_\mathcal{I}}$ of maximal matching simplex corresponding to the data symbols indexed by elements of $\mathcal{I}$ satisfies $\lceil\delta_{m_\mathcal{I}}\rceil\leq |\mathcal{I}|+1$. 
We have shown later that this bound is attainable in numerous instances, and notably, it can be achieved without the nearest integer approximation.

 The proposed lemma yields the number of recovery sets that contain a fixed number of non-systematic nodes of the systematic Ham($r,2$). The recovery sets with cardinality $(2^{r-1}-1)$ include both systematic and non-systematic nodes for $r>3$.
 \begin{lemma}
          Suppose $G$ is the systematic generator matrix of a binary Hamming code Ham($r, 2$), $r\geq3$. If a recovery set contains $t$ $(\leq r)$  non-systematic nodes, then the total number of such recovery sets is $\binom{r}{t}(2^{r-1}-t)$.

 \end{lemma}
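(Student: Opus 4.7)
The plan is to put the non-singleton minimum recovery sets into bijection with certain pairs $(y,i)$, where $y\in\mathbb{F}_2^r\setminus\{0\}$ labels a nonzero codeword of the dual simplex code $S(r,2)$ and $i$ is a systematic position in the support of that codeword, and then stratify by $\mathrm{wt}(y)=t$.

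First I would write the parity-check matrix in standard form as $H=[P^{\top}\mid I_r]$, so that the $r$ columns of $H$ indexed by the non-systematic servers of $G$ are precisely $e_1,\ldots,e_r\in\mathbb{F}_2^r$, while the $k$ columns indexed by systematic servers are the distinct nonzero vectors of $\mathbb{F}_2^r$ of Hamming weight at least $2$. Every nonzero codeword of $S(r,2)$ has the form $yH$ for a unique $y\in\mathbb{F}_2^r\setminus\{0\}$; it has total weight $2^{r-1}$, and since $(yH)_j=y\cdot h_j$, its support intersected with the block of non-systematic coordinates is exactly $\{k+s:y_s=1\}$, which has cardinality $\mathrm{wt}(y)$. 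Hence a codeword $yH$ has $t=\mathrm{wt}(y)$ ones in non-systematic positions and $2^{r-1}-t$ ones in systematic positions.

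Second, I invoke Lemma~\ref{Rset} (together with Theorem~\ref{main}, item~$1$) to observe that every non-singleton minimum recovery set $R$ for a (systematic) data symbol $i$ is of the form $R=\mathrm{Supp}(yH)\setminus\{i\}$ for some $y\ne 0$ and some systematic $i\in\mathrm{Supp}(yH)$, and conversely each such pair $(y,i)$ produces a valid non-singleton recovery set whose number of non-systematic nodes is exactly $\mathrm{wt}(y)$. Restricting to $\mathrm{wt}(y)=t$, there are $\binom{r}{t}$ admissible label vectors $y$, and for each such $y$ the codeword $yH$ supplies $2^{r-1}-t$ systematic positions that may serve as $i$. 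This gives the target count $\binom{r}{t}(2^{r-1}-t)$ of admissible pairs, and the lemma reduces to checking that different pairs yield different recovery sets.

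The key remaining step, and the main obstacle, is the injectivity. Suppose $(y_1,i_1)\ne(y_2,i_2)$ but $R_{y_1,i_1}=R_{y_2,i_2}=R$. Then $\mathrm{Supp}(y_1H)$ and $\mathrm{Supp}(y_2H)$ both equal $R$ together with a single extra coordinate, so their symmetric difference is contained in $\{i_1,i_2\}$. Consequently $(y_1+y_2)H$ is a nonzero codeword of $S(r,2)$ of weight at most $2$, which contradicts the minimum distance $d^{\perp}=2^{r-1}\ge 4$ for $r\ge 3$. Thus the pairs are in bijection with the non-singleton recovery sets having exactly $t$ non-systematic nodes, and the total count is $\binom{r}{t}(2^{r-1}-t)$, as claimed.
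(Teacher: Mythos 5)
Your proposal is correct and follows essentially the same counting as the paper: choosing a weight-$t$ vector $y$ is the same as the paper's choice of the $t$ non-systematic nodes ($\binom{r}{t}$ ways), and choosing the systematic coordinate $i$ in $\mathrm{Supp}(yH)$ matches the paper's choice of the recovered symbol $s$ among the $2^{r-1}-t$ systematic positions in the support of the sum of those parity columns. The only difference is cosmetic: you phrase the count via the dual codewords $yH$ and add an explicit distinctness argument using $d^{\perp}=2^{r-1}$ (note only that when $y_1=y_2$ the sum codeword is zero, but then $i_1=i_2$ follows directly), a step the paper leaves implicit.
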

\begin{proof}
 Let $R_a$ be a non-singleton recovery set with $n_1, n_2, \cdots, n_t$ non-systematic nodes for some data symbol $a$. Since it contains $t$ non-systematic nodes, the remaining $2^{r-1}-1-t$ nodes are systematic. In other words, the support of the sum of these $t$ non-systematic columns has cardinality $2^{r-1}-t$. Let \begin{align*}
     S_p& =\text{Supp}\left(\sum (c_{n_1}+c_{n_2}+\cdots+c_{n_t})\right)\\ &=\{s_1, s_2, \ldots, s_{2^{r-1}-t}\}\subset [k].
 \end{align*} 
    
    Among the nodes in $S_p$, any $(2^{r-1}-1-t)$ are the chosen systematic nodes to form a recovery set for the remaining data symbol. Thus, a recovery set of $s\in S_p\setminus \{s_{i_1}$, $s_{i_2}, \ldots, s_{i_{(2^{r-1}-t-1)}}\}$ has the form
    \begin{equation}\label{form}
        \{n_1, n_2, \ldots, n_t, s_{i_1}, s_{i_2}, \ldots, s_{i_{(2^{r-1}-t-1)}}\}.
    \end{equation} Therefore,  
    the total number of recovery sets with $t$ non-systematic nodes is $\binom{r}{t}(2^{r-1}-t)$.\end{proof}

 If the set $\mathcal{I}$ is chosen in such a way that in the associated partial hypergraph there is a hyperedge that does not contain any nodes from $\mathcal{I}$ or has only non-systematic nodes, then any vertex cover of the partial hypergraph cannot consist only of the nodes present in the set  $\mathcal{I}$. For instance, consider the $2^{r-1}-r$ recovery sets of the form (\ref{form}) with $r$ non-systematic nodes and $2^{r-1}-r-1$ systematic nodes. The sum of the columns of $G$ corresponding to these $r$ non-systematic nodes is a vector of weight $2^{r-1}-r$ in $\mathbb{F}_2^k$. Denote this vector by $c_r$. If for some $s\in$ Supp($c_r$)$\cap\mathcal{I}$, the set Supp($c_r$) $\setminus \{s\}$ is not in $\mathcal{I}$, then there exists a recovery set say $R_s$ of $s$ which has $r$ non-systematic nodes and $(2^{r-1}-r-1)$ systematic nodes present in Supp($c_r$) $\setminus \{s\}$. Therefore, a vertex cover of the partial hypergraph $\Gamma_{G_\mathcal{I}}$ also has an element from $[n]\setminus\mathcal{I}$ to cover the hyperedge associated with the recovery set $R_s$, along with the set $\mathcal{I}$.

 The above argument raises a question about how large a vertex cover of the partial hypergraph  $\Gamma_{G_\mathcal{I}}$ is, compared to the set $\mathcal{I}$. The following results aim to address this question.

\begin{theorem}\label{2node}
    Let $G$ be a generator matrix of Ham($r, 2$) such that $G$ contains systematic columns corresponding to two information nodes $i$ and $j$. Then for a service request $\lambda\in \Lambda(G)$, \[\lambda_i+\lambda_j\leq3.\]
    \end{theorem}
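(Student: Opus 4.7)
The plan is to prove $\lambda_i + \lambda_j \leq 3$ by bounding the transversal number of the partial recovery hypergraph $\Gamma_{G_{\{i,j\}}}$. The discussion preceding Theorem \ref{odd} extends the chain $\sum_\ell \lambda_\ell \leq \mu_f \leq \tau$ of Lemma \ref{threeB} to partial hypergraphs, so it suffices to exhibit a vertex cover of $\Gamma_{G_{\{i,j\}}}$ of size $3$.

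Two of the three vertices are forced: because $G$ contains systematic columns for $i$ and $j$, the singletons $\{i\}$ and $\{j\}$ are hyperedges of $\Gamma_{G_{\{i,j\}}}$, so every transversal must include $i$ and $j$. For the third vertex I will pick $v^*$ to be the unique column index with $h_{v^*} = h_i + h_j$, where $h_1, \ldots, h_n \in \mathbb{F}_2^r$ denote the columns of a parity-check matrix $H$ of $C$. The defining property of $\mathrm{Ham}(r,2)$ is that the columns of $H$ realize every non-zero vector of $\mathbb{F}_2^r$, so such $v^*$ exists, and $v^* \notin \{i,j\}$ since $h_i + h_j \notin \{h_i, h_j\}$.

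To verify that $T = \{i, j, v^*\}$ is a transversal, I will use the characterization from Lemma \ref{Rset} (together with Remark \ref{nonG}): every non-singleton minimum recovery set for the $i$-th data object has the form $R = \mathrm{Supp}(c) \setminus \{i\}$ for some codeword $c = uH \in C^\perp$ with $c_i = u \cdot h_i = 1$. The whole argument then reduces to the one-line identity
\[
c_{v^*} \;=\; u \cdot h_{v^*} \;=\; u \cdot h_i + u \cdot h_j \;=\; 1 + c_j.
\]
Thus if $c_j = 0$ then $c_{v^*} = 1$ and $v^* \in R$, while if $c_j = 1$ then $j \in R$; either way $R \cap T \neq \emptyset$. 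The symmetric statement with $i$ and $j$ exchanged handles the non-singleton recovery sets for the $j$-th data object, and the singletons are covered by construction, giving $\tau(\Gamma_{G_{\{i,j\}}}) \leq 3$ and hence the claim.

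The main content — essentially the only content — is recognizing that the column sum $h_i + h_j$ produces exactly the dichotomy needed: every non-singleton recovery set of $i$ that misses $j$ is forced to contain $v^*$, and vice versa. A secondary but minor point is that Lemma \ref{Rset} is stated for fully systematic $G$; however, its second clause constrains only codewords of the dual code $C^\perp$ and the systematic column for the relevant data object, so it extends to the hypothesis here, consistent with Remark \ref{nonG}.
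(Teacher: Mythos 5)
Your proposal is correct and follows essentially the same route as the paper: both pass to the partial recovery hypergraph for $\mathcal{I}=\{i,j\}$, take the third cover vertex to be the unique node $s$ with $v_s=v_i+v_j$ (guaranteed by the Hamming parity-check structure), verify that $\{i,j,s\}$ is a transversal via the dual-codeword characterization of recovery sets, and conclude $\lambda_i+\lambda_j\leq\tau\leq 3$ from Lemma \ref{threeB}. Your explicit identity $c_{v^*}=c_i+c_j$ is just a cleaner way of stating the dichotomy the paper obtains by splitting the hyperedges into those containing $j$ and those containing $s$, and your remark on extending Lemma \ref{Rset} to the partially systematic setting matches the paper's Remark \ref{nonG}.
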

\begin{proof}
    Let $\mathcal{I}=\{i, j\}.$ In the partial recovery hypergraph $\Gamma_{G_\mathcal{I}}$, there are two edges $\{i\}$ and $\{j\}$, each of which is assigned a weight $1$. Then the capacity of these two nodes is exhausted. Therefore, now the hyperedges labeled $e_i$ containing node $j$ cannot be used for more request allotment. Similarly, the hyperedges labeled $e_j$ containing node $i$ cannot be used.

   Let $v_m$ represent the $m$-th column of the parity-check matrix of $C$ corresponding to the matrix $G$. Note that each hyperedge labeled $e_i$ containing the vertex $j$ corresponds to a codeword $c = (c_1, c_2, \ldots, c_n)\in C^\perp$ such that $c_i =c_j = 1 $. 
   Since all the columns of the parity-check matrix $H$ of Ham($r, 2$) are distinct and linearly independent, for $i$ and $j$, there is always a node $s \in [n]$ such that $ v_i + v_j = v_s$. Consequently, in the partial hypergraph, there are $ 2^{r-2}$ hyperedges labeled $e_i$ that contains $s$ but not $j$, and there are also $2^{r-2}$ hyperedges labeled $e_i$ that contains $j$ but not node $s$. The same can be said for the hyperedges labeled $e_j$ in the context of nodes $i$ and $s$.
   
    After assigning weight $1$ to each node $i$ and $j$, there are $2^{r-2}$ hyperedges $e_i$ each containing node $s$, and $2^{r-2}$ hyperedges $e_j$ each containing node $s$, along with all nodes having unused capacity. Since each available recovery set contains $s$, at most one request can be served in this system. A vertex cover for $\Gamma_{G_\mathcal{I}}$ is given by the set $\{i, j, s\}$. As a result, we conclude that
 $\lambda_i+\lambda_j\leq3$.
  \end{proof}

 \begin{remark}
     If the node $s$ which is uniquely determined by the nodes $i$ and $j$, is also a systematic node in $G$, then $\lambda_i+\lambda_j+\lambda_s\leq3.$
 \end{remark}
The argument of Theorem \ref{2node} has been extended to provide bounds on the sum of service rates for systematic Ham($r, 2$) corresponding to the data symbols indexed by any subset of $[k]$. In this theorem, we discuss the characteristics of the SRR for systematic binary Hamming codes and prove the achievability of the identified bounds. 
   \begin{theorem}\label{main2}
       Suppose $G$ is the systematic generator matrix of a binary Hamming code Ham($r, 2$) where $r>3,$ and $H$ is the corresponding parity-check matrix. Let $\mathcal{I}$ be a set consisting of information nodes (not all) with $\mathcal{I}\geq2$, and $v_j$ be the $j$-th column of $H$. If the demand for all information symbols not indexed by elements in $\mathcal{I}$ is zero, then the size $\delta_{m_\mathcal{I}}$ of the maximal matching simplex is given by:
\[\sum_{i\in \mathcal{I}}\lambda_i\leq \delta_{m_\mathcal{I}}=\begin{cases}
    
    |\mathcal{I}|, \quad \text{if } \sum_{j\in \mathcal{I}}v_j=0 \in \mathbb{F}_2^r,\\ |\mathcal{I}|+1, \quad \text{otherwise}. 
\end{cases}\]
Additionally, if the service rate for all the information symbols is non-zero, i.e., $\mathcal{I}=[k]$, then $\delta_{m_\mathcal{I}}=k.$
   \end{theorem}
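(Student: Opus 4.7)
The plan is to compute the fractional matching number $\mu_f(\Gamma_{G_\mathcal{I}})$, which by definition equals $\delta_{m_\mathcal{I}}$, by bracketing it via the sandwich $\nu \leq \mu_f \leq \tau$ from Lemma~\ref{threeB}; I aim to exhibit both an explicit vertex cover and an explicit matching of the claimed common size. By Lemma~\ref{Rset} every non-singleton minimum recovery set for an information node $i$ is $R = \text{Supp}(c') \setminus \{i\}$ for some $c' = uH \in S(r,2)$ with $c'_i = u \cdot v_i = 1$, so the identity controlling whether $R$ meets $\mathcal{I}$ is
\[\sum_{j \in \mathcal{I}} c'_j \;=\; u \cdot \Big(\sum_{j \in \mathcal{I}} v_j\Big).\]

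In Case~1, with $\sum_{j \in \mathcal{I}} v_j = 0$, the right-hand side vanishes, so the condition $c'_i = 1$ forces another $j \in \mathcal{I}$ to have $c'_j = 1$. Hence $\mathcal{I}$ itself is a vertex cover of $\Gamma_{G_\mathcal{I}}$, giving $\tau \leq |\mathcal{I}|$; combined with the singleton matching $\{\{i\} : i \in \mathcal{I}\}$ of the same size, this yields $\nu = \mu_f = \tau = |\mathcal{I}|$, as claimed.

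In Case~2, with $\sum_{j \in \mathcal{I}} v_j = v_s \neq 0$, the identity becomes $\sum_{j \in \mathcal{I}} c'_j = c'_s$. Either $c'_s = 0$, in which case another $j \in \mathcal{I}$ with $c'_j = 1$ is forced, or $c'_s = 1$, in which case $s \in R$ provided $s \neq i$; in either event $\mathcal{I} \cup \{s\}$ hits $R$, giving $\tau \leq |\mathcal{I}| + 1$. For the matching lower bound I would solve the linear system $u \cdot v_i = 1$, $u \cdot v_j = 0$ for $j \in \mathcal{I} \setminus \{i\}$, for a suitable choice of $i \in \mathcal{I}$; the identity forces $c'_s = 1$, so the resulting support (minus $i$) contains $s$ and is disjoint from $\{\{j\} : j \in \mathcal{I}\}$, lifting the matching to size $|\mathcal{I}| + 1$. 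Solvability is secured by selecting $i$ so that $v_i \notin \text{span}\{v_j : j \in \mathcal{I} \setminus \{i\}\}$, which is available thanks to $\sum v_j \neq 0$ and the distinctness of the columns of $H$. The supplementary case $\mathcal{I} = [k]$ is immediate from item~(3) of Theorem~\ref{odd}, since $[k]$ is a vertex cover of the full recovery hypergraph whenever $r > 3$, and is matched in size by the singleton matching.

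The main obstacle will be Case~2 when $s \in \mathcal{I}$, because then $\mathcal{I} \cup \{s\}$ collapses to $\mathcal{I}$ and one must locate an auxiliary vertex $s' \in [n] \setminus \mathcal{I}$ such that $\mathcal{I} \cup \{s'\}$ still covers every residual ``bad'' recovery set, namely those of data object $s$ that miss $\mathcal{I}$, corresponding to the affine solution set of $c'_s = 1$ and $c'_j = 0$ for $j \in \mathcal{I} \setminus \{s\}$. Proving these supports share a common coordinate outside $\mathcal{I}$ amounts to a careful dimension analysis of $W_0 = \text{span}\{v_j : j \in \mathcal{I} \setminus \{s\}\}$ and the position of $v_s$ relative to $W_0$, and this bookkeeping is where the principal technical effort of the proof will concentrate.
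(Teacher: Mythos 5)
Your proposal follows essentially the same route as the paper: the sandwich $\nu\leq\mu_f\leq\tau$ from Lemma \ref{threeB}, the singleton matching $\{\{i\}: i\in\mathcal{I}\}$, the transversal $\mathcal{I}$ when $\sum_{j\in\mathcal{I}}v_j=0$ and $\mathcal{I}\cup\{s\}$ otherwise, all controlled by the identity $\sum_{j\in\mathcal{I}}c'_j=u\cdot\sum_{j\in\mathcal{I}}v_j$ for $c'=uH$ (your one-line version is cleaner than the paper's split into ``$c$ is a row of $H$'' versus ``a linear combination of rows'', but it is the same computation). The case $\sum_{j\in\mathcal{I}}v_j=0$ and the case $\mathcal{I}=[k]$ are handled exactly as in the paper and are fine.

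The genuine gap is in your Case 2 lower bound, and it is not confined to the $s\in\mathcal{I}$ subcase you flag. You claim that $\sum_{j\in\mathcal{I}}v_j\neq0$ together with distinctness of the columns yields some $i\in\mathcal{I}$ with $v_i$ outside the $\mathbb{F}_2$-span of $\{v_j: j\in\mathcal{I}\setminus\{i\}\}$, so that the system $u\cdot v_i=1$, $u\cdot v_j=0$ for $j\in\mathcal{I}\setminus\{i\}$ is solvable. This is false. In Ham$(4,2)$ take $\mathcal{I}$ to index the six information nodes whose $H$-columns are $x$, $y$, $x+y$, $t$, $x+t$, $y+t$ with $x=(1,1,0,0)$, $y=(1,0,1,0)$, $t=(0,0,1,1)$ (all of weight at least $2$, hence systematic nodes); then $\sum_{j\in\mathcal{I}}v_j=x+y+t=(0,1,0,1)\neq 0$ and $s\notin\mathcal{I}$, yet each of the six columns lies in the span of the other five. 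Writing $a=u\cdot x$, $b=u\cdot y$, $d=u\cdot t$, the restriction of any dual codeword to $\mathcal{I}$ is $(a,\,b,\,a+b,\,d,\,a+d,\,b+d)$, whose weight is $0$, $3$ or $4$ and never $1$; hence no hyperedge of $\Gamma_{G_\mathcal{I}}$ avoids $\mathcal{I}$, the set $\mathcal{I}$ is itself a transversal, and $\nu=\mu_f=\tau=|\mathcal{I}|=6$ rather than $7$. So the extra disjoint edge you need to lift the matching does not exist, and neither your construction nor the auxiliary-vertex repair you sketch for $s\in\mathcal{I}$ can be completed: in such configurations only the inequality $\sum_{i\in\mathcal{I}}\lambda_i\leq|\mathcal{I}|+1$ survives, not the stated value of $\delta_{m_\mathcal{I}}$. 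For what it is worth, the paper's own proof takes the same path, tacitly assumes $a\neq s$ when concluding $s\in h_{c_a}$, and merely asserts that the matching number is $|\mathcal{I}|+1$ without exhibiting the extra edge; so you have put your finger on exactly the step that requires an additional hypothesis (for instance, that some $v_i$ with $i\in\mathcal{I}$ lies outside the span of the remaining columns indexed by $\mathcal{I}$).
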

   \begin{proof}
       The columns of $H$ are all distinct. Also, since every non-zero vector of $\mathbb{F}_2^r$ is a column of $H$, for any subset $\mathcal{I}$ of information nodes, the sum $\sum_{j\in \mathcal{I}} v_j$ is again a column of $H$ or the all-zero vector.
       
Consider  $\sum_{j\in \mathcal{I}}v_j=0.$ This indicates that in any recovery set of an $i$-th data symbol, $i\in\mathcal{I}$, there is always at least one another node $s\in\mathcal{I}$ present in that recovery set. Moreover, all the hyperedges of the partial hypergraph $\Gamma_{G_\mathcal{I}}$ always contain at least one element of $\mathcal{I}$. This implies that the set $\mathcal{I}$ forms a vertex cover of $\Gamma_{G_\mathcal{I}}$. There are $|\mathcal{I}|$ distinct hyperedges in $\Gamma_{G_\mathcal{I}}$ which has the form $\{i\}$ with $i\in\mathcal{I}$, hence this set is the smallest vertex cover of $\Gamma_{G_\mathcal{I}}$ and $\tau(\Gamma_{G_\mathcal{I}})=|\mathcal{I}|=\nu(\Gamma_{G_\mathcal{I}})$. 

 Now, suppose $\sum_{j\in \mathcal{I}}v_j\neq 0$. Then there exists a column of $H$, say $v_s$, such that \begin{equation}\label{nodeS}
     \sum_{j\in\mathcal{I}}v_j=v_s.\end{equation} If a codeword $c=(c_1, c_2, \cdots, c_n)\in C^\perp$ with $i\in$ Supp($c$) and $s\notin$ Supp($c$) for some $i\in\mathcal{I}$, then $|\text{Supp}(c) \cap \mathcal{I}|\geq2$. Thus, the hyperedges associated with such codewords of the dual code are covered by $\mathcal{I}$. 
 
  Now we consider those hyperedges of $\Gamma_{G_\mathcal{I}}$ which do not contain any vertex from the set $\mathcal{I}$. These hyperedges correspond to the codewords $c\in C^\perp$ such that $|\text{Supp}(c) \cap \mathcal{I}|=1$.  Let $h_{c_a}\in \mathcal{R}_a$ be the hyperedge of $\Gamma_{G_\mathcal{I}}$ corresponding to a codeword $c\in C^\perp$ for $ \text{Supp}(c) \cap \mathcal{I}=\{a\}$. 

 \begin{itemize}
     \item{\textbf{Case 1:}} If the codeword $c$ is a row of the parity-check matrix, then considering Equation (\ref{nodeS}), we have $c_s=1$. Thus, $s\in h_{c_a}$ and $\mathcal{I}\cup\{s\}$ is the smallest transversal of $\Gamma_{G_\mathcal{I}}$.

     \item{\textbf{Case 2:}}
If the codeword $c\in C^\perp$ is a linear combination of rows $\{x_1, x_2, \ldots, x_t\}, 1\leq t\leq r,$ of the parity-check matrix $H$. Then, from the assumption $\text{Supp}(c) \cap \mathcal{I}=\{a\}$, we have \begin{equation}\label{a}
    x_{1a}+x_{2a}+\cdots+x_{ta}=1,
\end{equation}
\begin{equation}\label{b}
    \sum_{i=1}^tx_{ij}=0, \text{ for all } j\in \mathcal{I}\setminus \{a\},
\end{equation}
where $x_{ij}$ is the $j$-th coordinate of the row vector $x_i$.
After adding equations \eqref{a} and \eqref{b} we get \begin{align}
     \sum_{j\in \mathcal{I}} \sum_{i=1}^tx_{ij}=\sum_{i=1}^t\sum_{j\in \mathcal{I}}x_{ij}&=1\label{c}.\end{align} 
Now, applying the relation given in Equation \eqref{nodeS} to Equation \eqref{c}, we get \begin{align*}
   \sum_{i=1}^t\sum_{j\in \mathcal{I}}x_{ij}= \sum_{i=1}^t x_{is}=c_s&=1.
\end{align*}

This indicates that the node $s$ is in the hyperedge $h_{c_a}$. In this case as well, we conclude that the smallest vertex cover of the partial hypergraph $\Gamma_{G_{\mathcal{I}}}$ is formed by the set $\mathcal{I}\cup\{s\}$.
 \end{itemize}

 Note that the matching number of $\Gamma_{G_{\mathcal{I}}}$ in both the cases is $|\mathcal{I}|+1$.
 Therefore, the size $\delta_{m_\mathcal{I}}$ of the maximal matching simplex is $|\mathcal{I}|+1$.

For the systematic Ham($r, 2$) with $r > 3$ and the set $\mathcal{I} = [k]$, it is evident that all singleton hyperedges are included in a vertex cover. Also, there are no hyperedges that do not contain a systematic server; the set $[k]$ forms a complete smallest vertex cover of $\Gamma(G)$. Therefore, we conclude that $\delta_{m_\mathcal{I}} = k$.
\end{proof}
Since any two columns of the parity-check matrix of Ham($r, 2$) are linearly independent, Theorem \ref{2node} is a special case of Theorem \ref{main2}. The fractional matching number is equal to the transversal number for $\Gamma(G)$ of the systematic Ham($r, 2$), therefore, if the cumulative service rate exceeds the integral value $\tau(\Gamma(G))$, then the system is not able to complete the requested demand.

\begin{corollary}
    Let $r>3$ be a positive integer, and consider a $k$-tuple $(\lambda_1, \lambda_2, \ldots, \lambda_k)$, where $\lambda_i=1$ for all $i\in[k]$ except one index $j\in[k]$, for which $\lambda_j>1$. Then, the $k$-tuple $(\lambda_1, \lambda_2, \ldots, \lambda_k)$ does not belong to the service rate region of systematic Ham($r, 2$).
\end{corollary}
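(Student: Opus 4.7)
The plan is to derive a contradiction directly from the cumulative upper bound on the aggregate service rate established earlier in the section. The observation driving the argument is that the hypothesized tuple has a sum strictly larger than $k = 2^r-1-r$, whereas every element of $\Lambda(G)$ must have sum at most $k$ when $r>3$.

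First, I would invoke Theorem \ref{main2} with the choice $\mathcal{I}=[k]$ (equivalently, the third clause of Theorem \ref{odd}). This gives $\delta_{m_{[k]}}=k$, and consequently every tuple $(\lambda_1,\ldots,\lambda_k)\in \Lambda(G)$ satisfies
\[
\sum_{i=1}^{k}\lambda_i \;\leq\; k \;=\; 2^r-1-r.
\]
Next, I would evaluate this sum for the tuple in question. Since $\lambda_i=1$ for every $i\in[k]\setminus\{j\}$, it follows that
\[
\sum_{i=1}^{k}\lambda_i \;=\; (k-1)+\lambda_j \;>\; (k-1)+1 \;=\; k,
\]
where the strict inequality uses the hypothesis $\lambda_j>1$. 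This contradicts the upper bound, so the tuple cannot belong to $\Lambda(G)$.

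The argument is essentially immediate, so there is no genuine obstacle; the corollary is a direct consequence of the tightness of the aggregate bound $\delta_{m_{[k]}}=k$. The only subtlety worth recording is that the bound is attained precisely at the all-ones tuple $(1,1,\ldots,1)$, which is served by assigning each request to its own systematic server; any attempt to raise a single coordinate above $1$ while keeping the remaining coordinates at $1$ therefore forces the total demand to exceed the transversal number $\tau(\Gamma(G))=k$ of the recovery hypergraph, making a valid allocation impossible.
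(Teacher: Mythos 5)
Your proof is correct and follows essentially the same route as the paper: the corollary is stated there as an immediate consequence of the aggregate bound $\sum_{i=1}^{k}\lambda_i \leq \tau(\Gamma(G)) = k$ for $r>3$ (Theorem \ref{odd}, third part, equivalently $\delta_{m_{[k]}}=k$ in Theorem \ref{main2}), and your computation $(k-1)+\lambda_j > k$ is exactly the intended contradiction.
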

The next theorem summarizes the discussion presented above.

\begin{theorem}
    Let $G$ denote the systematic generator matrix and $H$ be the corresponding parity-check matrix for Ham($r, 2$) with the parameters $[2^r - 1, k = 2^r - 1 - r, d = 3]$. Consider a subset $\mathcal{I}$ of the set of systematic nodes in $G$ such that $2<|\mathcal{I}| \neq k$, and the demand for all data symbols indexed by elements in $[k] \setminus \mathcal{I}$ is zero. Let $v_j$ denote the $j$-th column of $H$. Then the service rate region for the systematic Ham$(r, 2)$ is given by:
\[\begin{cases}
    \lambda_i\geq 0, \quad \text{ for all }i\in[k],\\
    \lambda_i+\lambda_j \leq 3 \quad \text{ for all }i, j\in[k],\\
    \sum_{i\in \mathcal{I}}\lambda_i\leq \delta_{m_\mathcal{I}}=\begin{cases}
    |\mathcal{I}|, \quad \text{if } \sum_{j\in \mathcal{I}}v_j=0 \in \mathbb{F}_2^r,\\|\mathcal{I}|+1, \quad \text{otherwise},
    \end{cases}\\
    \sum_{i=1}^k\lambda_i\leq \begin{cases}
            k, \quad \text{for }r>3, \\ 5, \quad \text{for }r=3.
        \end{cases}
     \end{cases}\]
    \end{theorem}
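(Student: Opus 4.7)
The plan is to present this final theorem as a direct consolidation of the bounds already derived in the preceding results, so the proof reduces to a bookkeeping argument pairing each inequality with its source and with the construction that shows it is attained. No new combinatorial argument is needed, since all the work has been carried out in Theorem \ref{odd}, Theorem \ref{2node}, and Theorem \ref{main2}.

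First I would dispatch the non-negativity constraints $\lambda_i \geq 0$ directly from Definition \ref{SRRdef}, since every coefficient $\lambda_{iR}$ in a valid allocation is non-negative and $\lambda_i$ is a sum of such coefficients. Next, the pairwise bound $\lambda_i + \lambda_j \leq 3$ is exactly Theorem \ref{2node}: any two distinct columns $v_i, v_j$ of $H$ determine a unique $v_s$ with $v_i + v_j = v_s$, and after saturating the singleton edges at $i$ and $j$, every remaining hyperedge in the partial hypergraph passes through $s$, capping the additional load at $1$. The subset constraint $\sum_{i\in\mathcal{I}}\lambda_i \leq \delta_{m_\mathcal{I}}$ is a verbatim restatement of Theorem \ref{main2}, where the dichotomy on $\sum_{j\in\mathcal{I}} v_j$ corresponds to whether $\mathcal{I}$ itself, or $\mathcal{I} \cup \{s\}$ for the unique $s$ with $v_s = \sum_{j\in\mathcal{I}} v_j$, forms the minimum transversal of $\Gamma_{G_\mathcal{I}}$. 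The total-sum bound is point (3) of Theorem \ref{odd}, with the split between $r>3$ and $r=3$ arising from whether a non-systematic vertex must be adjoined to the $k$ systematic columns to form a transversal of $\Gamma(G)$.

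The main subtlety, and where I would spend the most attention, is in making explicit that the statement is a description of the SRR polytope rather than merely a list of necessary conditions, so I must also invoke achievability for every listed face. For this I would cite the waterfilling allocation from the proof of Theorem \ref{maxd} to realize the single-object extremes, the transversal-based allocations implicit in the proof of Theorem \ref{main2} to realize the subset faces (serve one request at each singleton in $\mathcal{I}$, and when $\sum_{j\in\mathcal{I}} v_j \neq 0$ route one additional request through a hyperedge covered by $s$), and the systematic-plus-non-systematic allocation from the proof of Theorem \ref{odd} for the total-sum face. Assembling these necessity and achievability statements line by line produces the claimed characterization without further calculation.
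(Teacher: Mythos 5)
Your proposal is correct and follows essentially the same route as the paper: the paper offers no standalone proof of this theorem, presenting it explicitly as a summary that consolidates the non-negativity from the definition of the SRR, the pairwise bound from Theorem \ref{2node}, the subset bound from Theorem \ref{main2}, and the aggregate bound from Theorem \ref{odd}, exactly as you do. Your additional remark that achievability of each face must be invoked (via the waterfilling allocation and the transversal-based allocations from the earlier proofs) is, if anything, slightly more explicit than the paper's own treatment.
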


\begin{example}\label{finalex}
    Let $G$ be the matrix as described in Example \ref{exr3}. Then the service rate region of systematic Ham($3, 2$) is the feasible region of the linear programming problem with the following constraints:
\begin{align*}
   \lambda_a+\lambda_b+\lambda_c+\lambda_d&\leq5, \\
   \lambda_a+\lambda_b+\lambda_c&\leq3,\\
   \lambda_i+\lambda_j+\lambda_d &\leq4, \quad i\neq j\in\{a, b, c\}, \\
    \lambda_i+\lambda_j&\leq3, \quad i\neq j\in\{a, b, c, d\},\\
    0\leq \lambda_i&\leq3, \quad \forall  i\in \{a, b, c, d\}.
\end{align*}
    Two $3$-dimensional cross-sections of the corresponding $4$-dimensional service polytope are illustrated in Figure \ref{SRR}.
\end{example}

\begin{figure}[ht]
  \centering
\begin{tikzpicture}

\tdplotsetmaincoords{70}{110}
\begin{scope}[tdplot_main_coords, scale =0.8, shift={(0,0,0)}]
    \coordinate (A) at (0,0,0);
    \coordinate (B) at (3,0,0);
    \coordinate (C) at (0,3,0);
    \coordinate (D) at (0,0,3);
    \fill[blue!20, opacity=0.9] (A)--(B)--(C)--cycle;
    \fill[blue!20, opacity=0.9] (A)--(B)--(D)--cycle;
    \fill[blue!20, opacity=0.9] (A)--(C)--(D)--cycle;
    \fill[blue!20, opacity=0.9] (B)--(C)--(D)--cycle;
    \draw (B)--(C);
    \draw (D)--(B);
    \draw (C)--(D);
   
    \node at (0,0,-0.2) {\scriptsize 0};
    \node at (0,0.2,3) {\scriptsize 3};
    \node at (-0.1,3,0.2) {\scriptsize 3};
    \node at (3,0,-0.2) {\scriptsize 3};
     
    \coordinate (Xfix) at (3,0,0);
    \coordinate (Yfix) at (0,3,0);
    \coordinate (Zfix) at (0,0,3);

    \draw[dashed] (A) -- (Xfix);
   \draw[dashed] (A) -- (Yfix);
   \draw[dashed] (A) -- (Zfix);
     
    \draw[thick,->] (Xfix) -- (4.3,0,0) node[anchor=north east] {$\lambda_a$};
    \draw[thick,->] (Yfix) -- (0,4.3,0) node[anchor=north west] {$\lambda_b$};
    \draw[thick,->] (Zfix) -- (0,0,4.3) node[anchor=south]{$\lambda_c$};
\end{scope}
\end{tikzpicture}
\hspace{0.5cm}
\begin{tikzpicture}

\tdplotsetmaincoords{70}{110}
\begin{scope}[tdplot_main_coords, scale=0.8, shift={(11,0,0)}]
    \coordinate (A) at (0,0,0);
    \coordinate (B) at (3,0,0);
    \coordinate (C) at (0,3,0);
    \coordinate (D) at (0,0,3);
    \coordinate (E) at (1,1,2);
    \coordinate (F) at (2,1,1);
    \coordinate (G) at (1,2,1);
    \fill[gray!30, opacity=0.9] (A)--(B)--(C)--cycle;
     \fill[gray!30, opacity=0.9] (A)--(B)--(D)--cycle;
    \fill[gray!30, opacity=0.9] (A)--(C)--(D)--cycle;
    \fill[gray!30, opacity=0.9] (B)--(D)--(E)--(F)--cycle;
    \fill[gray!30, opacity=0.9] (B)--(C)--(F)--(G)--cycle;
    \fill[gray!30, opacity=0.9] (C)--(D)--(E)--(G)--cycle;
    \fill[gray!20, opacity=0.9] (E)--(F)--(G)--cycle;

    \coordinate (Xfix) at (3,0,0);
    \coordinate (Yfix) at (0,3,0);
    \coordinate (Zfix) at (0,0,3);

    \draw[dashed] (A) -- (Xfix);
   \draw[dashed] (A) -- (Yfix);
   \draw[dashed] (A) -- (Zfix);
    \draw(B)--(C);
    \draw (D)--(B);
    \draw (C)--(D);
\draw (E)--(G);
   
    \draw (D)--(E)--(F)--(B);
    \draw (F)--(G)--(C);

    \node at (0,-0.2,0) {\scriptsize 0};
    \node at (0,0.2,3) {\scriptsize 3};
    \node at (-0.1,3,0.2) {\scriptsize 3};
    \node at (3,0,-0.2) {\scriptsize 3};
    \draw[thick,->] (Xfix) -- (4.3,0,0) node[anchor=north east] {$\lambda_a$};
    \draw[thick,->] (Yfix) -- (0,4.3,0) node[anchor=north west] {$\lambda_b$};
    \draw[thick,->] (Zfix) -- (0,0,4.3) node[anchor=south]{$\lambda_d$};
\end{scope}

\end{tikzpicture}

    \caption{The service polytope considering (i) the index set $\mathcal{I}=\{a, b, c\}$ (top), and (ii) when the index set is $\mathcal{I}=\{a, b, d\}$ (bottom).}
    \label{SRR}
\end{figure}
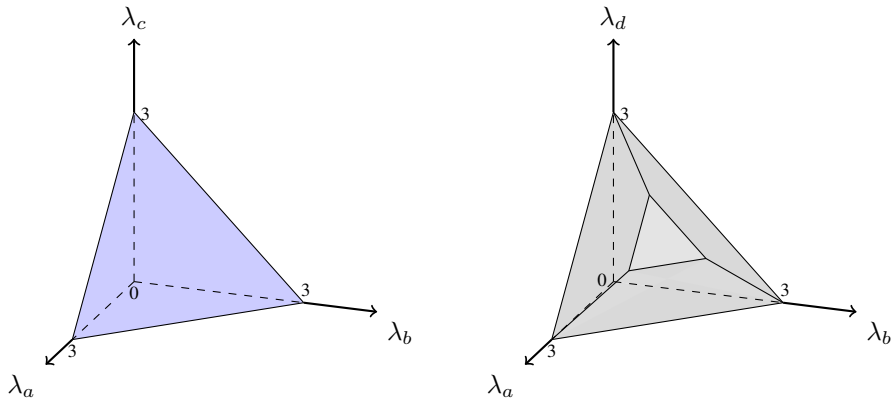

This example demonstrates that the maximum service rate for a data symbol can differ based on the choice of the generator matrix, given that not all other data symbols have zero demand (service rate). Although the volume of the service rate region remains constant, the orientation of the convex service polytope may vary depending on the selected generator matrix. When the demand for a particular data symbol exceeds that of others, we select the generator matrix accordingly for the same code.
For instance, let $a$, $b$, $c$, and $d$ represent files stored in a coded storage system that employs a binary Hamming coding scheme. If the demand for file $d$ is greater than the demand for the other files, then the generator matrix presented in Example \ref{ex1} is appropriate, as $(1,1,1,2) \in \Lambda(G)$.

In contrast, if more users are interested in accessing file $a$ than the other files, a more suitable generator matrix would be 

\[
G' = \begin{pmatrix}
1 & 1 & 1 & 1 & 0 & 0 & 0 \\
1 & 0 & 1 & 0 & 1 & 0 & 0 \\
0 & 1 & 1 & 0 & 0 & 1 & 0 \\
1 & 1 & 0 & 0 & 0 & 0 & 1 
\end{pmatrix}.
\]

This matrix allows for the recovery of file $a$ from the recovery set consisting of non-systematic nodes only, yielding $(2, 1, 1, 1) \in \Lambda(G')$. 

Consider a situation where the cumulative demand for objects $a$, $b$, and $c$ is high. In this case, the preferred generator matrix would be $G'$ as we have  $\lambda_a + \lambda_b + \lambda_c \leq 4$. However, as for generator matrix $G$, we have $\lambda_a + \lambda_b + \lambda_c \leq 3.$

\begin{remark}
    Let $\mathcal{I} \subseteq [k]$. Based on the demand for the data symbols indexed by elements of the set $\mathcal{I}$, we can pick the generator matrix that maximizes the cumulative service rate by finalizing the associated parity-check matrix. The alternative approach is to rearrange the order of the data symbols in the message vector to selectively enhance the overall service rate for the set $\mathcal{I}$.
\end{remark}

 We can determine the number of equations that limit the sum of service rates for a set of three data objects to $3$. This is useful for approximating the SRR when the demand for data files changes continuously, while the demand for any three specific files remains non-zero; that is, when $|\mathcal{I}| = 3$.

\vspace{0.2cm}
\noindent \textbf{Counting the number of inequalitites of the form $\mathbf{ \lambda_i+\lambda_j+\lambda_k\leq 3}$}

 Consider $v_i$ to be a column of the parity-check matrix corresponding to the systematic server of the $i$-th data symbol in $G$. In other words, $v_i\in \mathbb{F}_2^r$, such that $\text{wt}(v_i)\geq 2$.
 There are $\binom{2^r-1-r}{2}$ inequalities of the form $\lambda_i+\lambda_j\leq 3$ for systematic Ham($r, 2$). However, not all sums $v_i+v_j$ correspond to a data symbol; some may relate to a parity-check. 
 
 Our goal is to determine the number of combinations such that $v_i+v_j=v_s$, where each column corresponds to distinct data symbols at coordinate places $i, j, s\in[k]$ respectively. We will denote this number by $M_3$.

We first determine the number of combinations such that $v_i+v_j=e$, where $e$ is a standard basis vector in $\mathbb{F}_2^r$. Suppose $v_i=\{x_1, x_2, \ldots, x_r\}$ with Supp$(v_i)= \{i_1, i_2, \ldots, i_t\}$, where $2\leq t\leq r$. For $v_i+v_j=e$, we require \[\text{wt}(v_j)=
    t-1\text{ or }t+1,
\]

as the vectors $v_i$ and $v_j$ differ in exactly one position. 
\begin{itemize}
    \item There are $t$ vectors with weight $t-1$ and support $\{i_1, i_2, \ldots, i_t\}\setminus\{i_w\}$, where $w\in[t]$ respectively.
    \item There are $r-t$ vectors with weight $t+1$ and support $\{i_1, i_2, \ldots, i_t\}\cup\{i_u\}$, where $u\in [r]\setminus\text{Supp}(v_i).$ 
   \end{itemize}Note that for $t = 2$, there are no vectors $v_j$ of weight $t - 1$ that correspond to a systematic server in $G$. Therefore, we only consider vectors with weight $t + 1$. Similarly, any combination that involves vectors of weight $t$ and $t-1$ has already been accounted for when $t$ was set to $(t-1)$. Consequently, we conclude that there are $r-t$ vectors $v_j$ for each vector $v_i$ of weight $t$ such that $v_i+v_j=e$. 
   
   In $\mathbb{F}_2^r$, the total number of vectors of weight $t$ is given by $\binom{r}{t}$. This leads us to conclude that the total number of combinations of the form $v_i + v_j = e$ is equal to 

\[
\sum_{t=2}^{r-1} \binom{r}{t}(r - t).
\]
Therefore, the total number of combinations of the type $v_i + v_j = v_s$ is
\begin{align*}
    M_3&=\frac{\binom{2^r-1-r}{2}-\sum_{t=2}^{r-1}\binom{r}{t}(r-t)}{3},\\
&=\frac{\binom{2^r-1-r}{2}-r2^{r-1}+r^2}{3}.
\end{align*}

\section{Conclusion}\label{iv}
In this paper, we studied the recovery $G$ system of systematic Ham($r, q$), where $G$ is a generator matrix of Ham($r, q$). Through optimal request splitting, it was demonstrated that the size of the maximal achievable simplex in SRR of systematic Ham($r, q$) is $\frac{2q-1}{q-1}$, which is independent of $r$. By considering the recovery sets as hyperedges and using elementary properties of the dual code, we identified limits on the sum of service rates of data symbols contained in a given subset of all data symbols of systematic binary Hamming codes. We showed that in systematic Ham($r, 2$), the cumulative service rate cannot surpass the total number of data symbols $2^r-1-r$ for $r>3$, and for any two data symbols, $i$-th and $j$-th file, the combined service rate $\lambda_i+\lambda_j$ is less than $3$. For future work, further analysis of the hypergraph and projective space context could lead to a complete characterization of the SRR of Hamming codes. This concept may be extended to codes with higher locality, such as locally recoverable codes and maximally recoverable codes. Investigating the link between the SRR of the code and the dual code’s structure also presents an interesting research direction.

\section*{Acknowledgments}
The authors express their gratitude to the University Grants Commission (UGC), Government of India, for the financial support received under the UGC NET-JRF scheme.

\ifCLASSOPTIONcaptionsoff
  \newpage
\fi



%

%

\vfill


\end{document}